\newcommand     \lra             {\longrightarrow}
\newcommand     \ar             {{\sf ar}}
\newcommand     \myvec          {\mathaccent "017E}
\newcommand     \EQ             {\mathbin{=}}
\newcommand     \Names          {\textsf{Name}}
\newcommand     \conf[2]        {\langle{#1} \mid {#2} \rangle}
\newcommand     \confSimple[2]        {(\, #1 \mid  #2 \, )}
\newcommand     \IR             {{\cal R}}
\newcommand\ind[1] {\${#1}}
\def\bbbn{{\rm I\!N}} %natuerliche Zahlen
\newcommand     \NAT            {\bbbn}
\newtheorem{thm}{Theorem}
\newtheorem{definition}[thm]{Definition}
\newtheorem{lemma}[thm]{Lemma}
\newtheorem{example}[thm]{Example}
\newcommand\inn{\textsf{in}${}^2$\xspace}
\newcommand{\Func}[1]{\textsf{#1}}
\newtheorem{theorem}{Theorem}[section]
\newenvironment{program}{\begin{quote}\begin{alltt}\small{}}{\end{alltt}\end{quote}}
\newcommand{\Proof}{\noindent{}\textbf{Proof.} }
\newenvironment{proof}{\Proof}{}
\newcommand{\QED}{\ensuremath{\Box}}
\newcommand{\sym}[1]{{\mbox{\tt{}#1}}}
\newcommand{\normalscale}{0.55} % for pictures
\newcommand{\smallscale}{0.48} % for pictures
\newcommand{\tinyscale}{0.45} % for pictures
\newcommand{\Config}[2]{\langle \ #1 \, \mid \, #2 \ \rangle}
\newcommand{\CanonicalTo}{\Downarrow}
\newcommand{\rtoCom}{\rto_{\mathrm{com}}}
\newcommand{\rtoSub}{\rto_{\mathrm{sub}}}
\newcommand{\rtoCol}{\rto_{\mathrm{col}}}
\newcommand{\rtoInt}{\rto_{\mathrm{int}}}
\newcommand{\Sequence}[1]{{\vec{#1}}}
\newcommand{\ito}{\Rightarrow}         % interaction rule
\newcommand{\rto}{\rightarrow}         % rewriting
\newcommand{\sto}{\longrightarrow}         % in simple
\newcommand{\amYAA}[3]{(\, #1 \mid  #2 \mid  #3 \, )}
\newcommand{\set}[1]{\mathrm{#1}}
\newcommand{\Code}{\Theta}
\newcommand{\NameSet}{\mathcal{N}}
\newcommand{\TermSet}{\mathcal{T}}
\newcommand{\Env}{\set{E}}
\newcommand{\Sequential}{\Longrightarrow}
\newcommand{\defeq}{\stackrel{\mathrm{def}}{=}}
\newcommand{\Empty}{-}
\newcommand{\delimStrS}{\mbox{``}}
\newcommand{\delimStrE}{\mbox{''}}
\newcommand{\Str}[1]{\delimStrS{}#1\delimStrE}
\newcommand{\Compile}{\Func{Compile}}
\title{An Implementation Model for Interaction Nets}
\author{Abubakar Hassan 
\institute{Theory and Practice of Software Ltd\\ London, UK}
\and 
Ian Mackie 
\institute{LIX, Ecole Polytechnique\\ 91128 Palaiseau Cedex, France}
\and Shinya Sato
\institute{University of Sussex \\Brighton, UK}}
\date{\relax}  
\begin{document}
\maketitle
\bibliographystyle{eptcs}

\begin{abstract}
To study implementations and optimisations of interaction net systems
we propose a calculus to allow us to reason about nets, a concrete
data-structure that is in close correspondence with the calculus, and
a low-level language to create and manipulate this data
structure. These work together so that we can describe the compilation
process for interaction nets, reason about the behaviours of the
implementation, and study the efficiency and properties.
\end{abstract}

\section{Introduction}

Interaction nets \cite{LafontY:intn} offer a visual aspect to
rewriting. Analogous to term rewriting systems, a specific system is
defined from a user-defined set of nodes (cf.\ terms) and a
user-defined set of rewrite rules.  Nets are then graphs built from
the set of nodes and rules are graph transformations. Interaction nets
are therefore a specific, in fact very constrained, form of graph
rewriting.

Interaction nets have been used as a programming language, an
intermediate language, and as a target language for the compilation of
other programming languages. In all these application areas, prototype
implementations have been built to support the work, but they are
often not documented. The purpose of this paper is to take a new look
at the implementation of interaction nets. Specifically, we are
interested in documenting the implementation process, and in
particular showing a compilation of nets to a low-level language.  We
aim to build on the past experience and knowledge obtained from
building other implementations, and make a new contribution to this
investigation.  Specifically, we define: a calculus that can represent
and express results about interaction nets; a data-structure with a
low-level language that corresponds exactly to the calculus; and a
compilation of the calculus to this low-level language.

In addition to defining the above, an important aspect of this work is
the compilation of interaction rules: the ability to implement rules
efficiently will impact greatly on an implementation.  The low-level
language is close enough to machine code that we essentially get
atomic operations so that we can understand the cost of an
interaction. From a practical perspective, we get reliable, efficient
implementations of interaction nets from this work.  In this paper we
provide the foundations for this work, and point out a number of
directions that are currently being investigated.

A number of evaluators have been developed for interaction nets, and
one of the first abstract machines was given by Sousa
Pinto~\cite{PintoJS:seqcamin}. From another direction, a graphical
interpreter \inn was proposed by
Lippi~\cite{Eval:lippi} and it showed an aspect of interaction nets as
a visual programming tool.  Some evaluators have been proposed towards
efficient computation, called INET~\cite{Eval:HassanMS09} and
amineLight~\cite{Eval:HassanMS10}. Our approach is to build the
simplest implementation model for interaction nets that we believe can
be the most useful as well as providing the basis for more efficient
(including parallel) implementations in the future. We shall give some
evidence to support this claim in the current paper.
 
The next section recalls some background material, and in
Section~\ref{sec:calc} we describe the new calculus. In
Section~\ref{sec:data} we give the data-structure that corresponds to
the calculus together with a low-level language. We include also some
notes about the compilation of nets into the low-level language. In
Section~\ref{sec:discussion} we evaluate the work, and give some
directions for future work. We conclude the paper in
Section~\ref{sec:conc}.

%\section{Background and Motivation}\label{sec:back}
\section{Background}\label{sec:background}
\begin{wrapfigure}{R}{2.6cm}
\vspace{-5mm}
\small
\begin{center}
\includegraphics[width=1.6cm,keepaspectratio,clip]{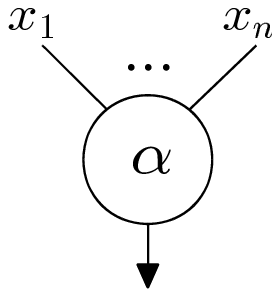}
\end{center}
\vspace{-3mm}
\caption{Agents}
\label{fig:agent}
\end{wrapfigure}
In the graphical rewriting system of interaction
nets~\cite{LafontY:intn}, we have a set $\Sigma$ of \emph{symbols},
which are names of the nodes. Each symbol has an arity $ar$ that
determines the number of \emph{auxiliary ports} that the node has. If
$ar(\alpha) = n$ for $\alpha \in \Sigma$, then $\alpha$ has $n+1$
\emph{ports:} $n$ auxiliary ports and a distinguished one called the
\emph{principal port}.  Nodes are drawn as shown in
Figure\ref{fig:agent}.  A \emph{net} built on $\Sigma$ is an
undirected graph with nodes at the vertices.  The edges of the net
connect nodes together at the ports such that there is only one edge
at every port.  A port which is not connected is called a \emph{free
  port}.
\begin{wrapfigure}[6]{R}{8.0cm}
\vspace{-5mm}
\small
%\begin{breakbox}
\begin{center}
\includegraphics[width=7.7cm,keepaspectratio,clip]{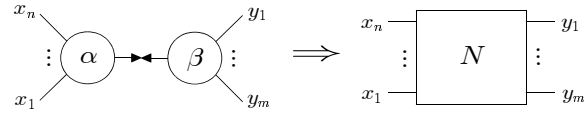}
\end{center}
%\end{breakbox}
\vspace{-3mm}
\caption{Interaction rules}
\label{fig:interactionrule}
\end{wrapfigure}
Two nodes $(\alpha,\beta)\in \Sigma\times\Sigma$ connected via their
principal ports form an \emph{active pair}, which is the interaction
nets analogue of a redex.  A rule $((\alpha, \beta) \ito N)$ replaces
the pair $(\alpha, \beta)$ by the net $N$.  All the free ports are
preserved during reduction, and there is at most one rule for each
pair of agents.  The diagram in Figure~\ref{fig:interactionrule}
illustrates the idea, where $N$ is any net built from $\Sigma$.  The
most powerful property of this graph rewriting system is that it is
one-step confluent---all reduction sequences are permutation
equivalent.

There are many possible data-structures that can be used to represent
interaction nets, for example single or double linked graphs.  Agents
have exactly one principal port, so we can use this to our advantage
and represent nets as a collection of trees, and use a simple, single
linked structure.  Computationally the cost of evaluating a net is
mostly due to the cost of the rewrite step: building the right-hand
side of the rule; connecting the new net to the old one; freeing up
the left-hand side of the rule (agents $\alpha$ and $\beta$ in the
diagram in Figure~\ref{fig:interactionrule}).

The goal of this paper is to build an implementation model, together
with a calculus, that explains this process and allows the study of
the cost of computation. Specifically, we provide a framework where we
can focus on the compilation of rules where we can perform the above
steps in the fewest instructions.

\section{Calculus}\label{sec:calc}

It is possible to reason about the graphical representation of nets,
but it is convenient to have a textual calculus for compact
representation. There are several calculi in the literature, and here
we review the \emph{Lightweight calculus}~\cite{Eval:HassanMS10},
which is a refined version of \cite{MackieIC:calin}.

\begin{description}
\item[Agents:] Let $\Sigma$ be a set of symbols, ranged over by
  $\alpha,\beta,\ldots$, each with a given \emph{arity} $\ar : \Sigma
  \to \NAT$. An occurrence of a symbol is called an \emph{agent}, and
  the arity is the number of auxiliary ports.
  
\item[Names:] Let $\NameSet$ be a set of names, ranged over by $x,y,z$, etc.
  $N$ and $\Sigma$ are assumed disjoint. Names correspond to wires in the graph system.
  
\item[Terms:] A term is built on $\Sigma$ and $\NameSet$ by the
  grammar: $t\; \mathrm{::=}\; x \mid \alpha(t_1,\ldots,t_n)$, \,
  where $x\in \NameSet$, $\alpha \in \Sigma$, $\ar(\alpha) = n$ and
  $t_1,\ldots,t_n$ are terms, with the restriction that each name can
  appear at most twice.  If $n = 0$, then we omit the parentheses.  If
  a name occurs twice in a term, we say that it is \emph{bound},
  otherwise it is \emph{free}.  We write $s,t,u$ to range over terms,
  and $\vec{s}, \vec{t}, \vec{u}$ to range over sequences of terms.  A
  term of the form $\alpha(t_1,\ldots,t_n)$ can be seen as a tree with
  the principal port of $\alpha$ at the root, and the terms
  $t_1,\ldots,t_n$ are the subtrees connected to the auxiliary ports
  of $\alpha$. The term $\ind{t}$ represents an indirection node which
  is created by reduction, and is not normally part of an initial
  term.

\item[Equations:] If $t$, $u$ are terms, then the unordered pair
  $t\EQ u$ is an \emph{equation}. $\Delta$ ranges over
  multisets of equations.

\item[Rules:] Rules are pairs of terms written: $\alpha(x_1, \ldots,
  x_n) \EQ \beta(y_1, \ldots, y_m) \ito \Delta$, where $(\alpha,\beta)
  \in \Sigma \times \Sigma$ is the active pair, and $\Delta$ is the
  right-hand side of the rule. All names occur exactly twice in a
  rule, and there is at most one rule for each pair of agents.  We
  call the free names $x_1,\ldots, x_n, y_1,\ldots,y_m$ in $\Delta$
  \emph{parameters} and write it as $\Delta(x_1,\ldots, x_n, y_1,
  \ldots, y_m)$.  All other names are bound.

\item[Configurations:] A \emph{configuration} is a pair
  $(\IR,\conf{\myvec{t}}{\Delta})$, where $\IR$ is a set of rules,
  $\myvec{t}$ a sequence of terms, and $\Delta$ a multiset of
  equations.  Each variable occurs at most twice in a configuration,
  and we extend the nomenclature of free and bound names from terms.
  The rules set $\IR$ contains at most one rule between any pair of
  agents, and it is closed under symmetry, thus if
  $\alpha(\vec{x})=\beta(\vec{y}) \in \IR$ then
  $\beta(\vec{y})=\alpha(\vec{x}) \in \IR$.  We use $C,C'$ to range
  over configurations.  We call $\myvec{t}$ the \emph{head} and
  $\Delta$ the \emph{body} of a configuration.

\end{description}

\begin{definition}[Names in terms]
  The set $\Names(t)$ of names of a term $t$ is defined in the
  following way, which extends to sequences of terms, equations,
  sequences of equations, and rules in the obvious way.

\begin{small}
$\Names(x)  =  \{ x\}, \quad
\Names(\alpha(t_1,\ldots,t_n)) = \Names(t_1)\cup\cdots\cup\Names(t_n), \quad
\Names(\ind{t}) =  \Names(t).
$
\end{small}
\end{definition}

  The notation $t[u/x]$ denotes a substitution that replaces the
  free occurrence of $x$ by the term $u$ in $t$. 
  This extends to  equations
  and configurations
  in the obvious way.

\begin{definition}[Instance of a rule]
  If $r$ is a rule $\alpha(x_1,\ldots,x_n) \EQ \beta(y_1,\ldots,y_m)
  \ito \Delta$, then $\widehat{\Delta}$ denotes a new generic
  \emph{instance} of $r$, that is, a copy of $\Delta$ where we
  introduce a new set of bound names so that those new names do not
  overlap with others already exists, but leave the free names
  (parameters) unchanged.  Example: if $\Delta$ is
  $\alpha(x,x)=\beta(a)$, then $\widehat{\Delta}$ is
  $\alpha(y,y)=\beta(a)$, where $y$ is a fresh name.
\end{definition}

The configuration $(\IR, \conf{\myvec{t}}{\Delta})$ represents a net
that we evaluate using $\IR$; $\Delta$ gives the set of active pairs
and the renamings of the net.  
We write $\conf{\myvec{t}}{\Delta}$ without $\IR$ when there is no ambiguity.
The roots of the terms in the head of
the configuration and the free names correspond to ports in the
interface of the net.  We work modulo $\alpha$-equivalence for bound
names.  The computation rules are defined below, and we use 
$\rto$ instead of $\rtoCom, \rtoSub, \rtoCol, \rtoInt$ when there is
no ambiguity.

%\begin{figure}[t]
{\small
%\vspace{-0.1cm}
\begin{description}
\item[Communication:] $\Config{\myvec{u}}{\Delta,x=t,x=s} \rtoCom
	\Config{\myvec{u}}{\Delta,t=s}$.
\item[Substitution:] 
	$\Config{\myvec{u}}{\Delta,\beta(\myvec{t})=u, x=s} \rtoSub
	\Config{\myvec{u}}{\Delta,\beta(\myvec{t})[s/x]=u}$ \ where $\beta \in \Sigma$ and $x$ occurs in $\myvec{t}$.

\item[Collect:] 
	$\Config{\myvec{u}}{\Delta,x=s} \rtoCol
		\Config{\myvec{u}[s/x]}{\Delta}$ \ where $x$ occurs in $\myvec{u}$.
\item[Interaction:]
	$\Config{\myvec{u}}{\Delta,\alpha(t_1,\dots,t_n)=\beta(s_1,\ldots,s_m)}
\rtoInt
	\Config{\myvec{u}}{\Delta,\widehat{\Delta_r}[t_1/x_1,\ldots,t_n/x_n,s_1/y_1,\ldots,s_m/y_m]}$ 

where $\alpha(x_1,\ldots,x_n)= \beta(y_1,\ldots,y_n) \ito \Delta_r \in \IR$.
\end{description}
}
%\vspace{-0.3cm}
%\caption{Computation rules in Lightweight calculus}
%\label{fig:computation-rules-lightweight}
%\end{figure}

\begin{figure}[t]
\begin{center}
\includegraphics[scale=\smallscale,keepaspectratio,clip]{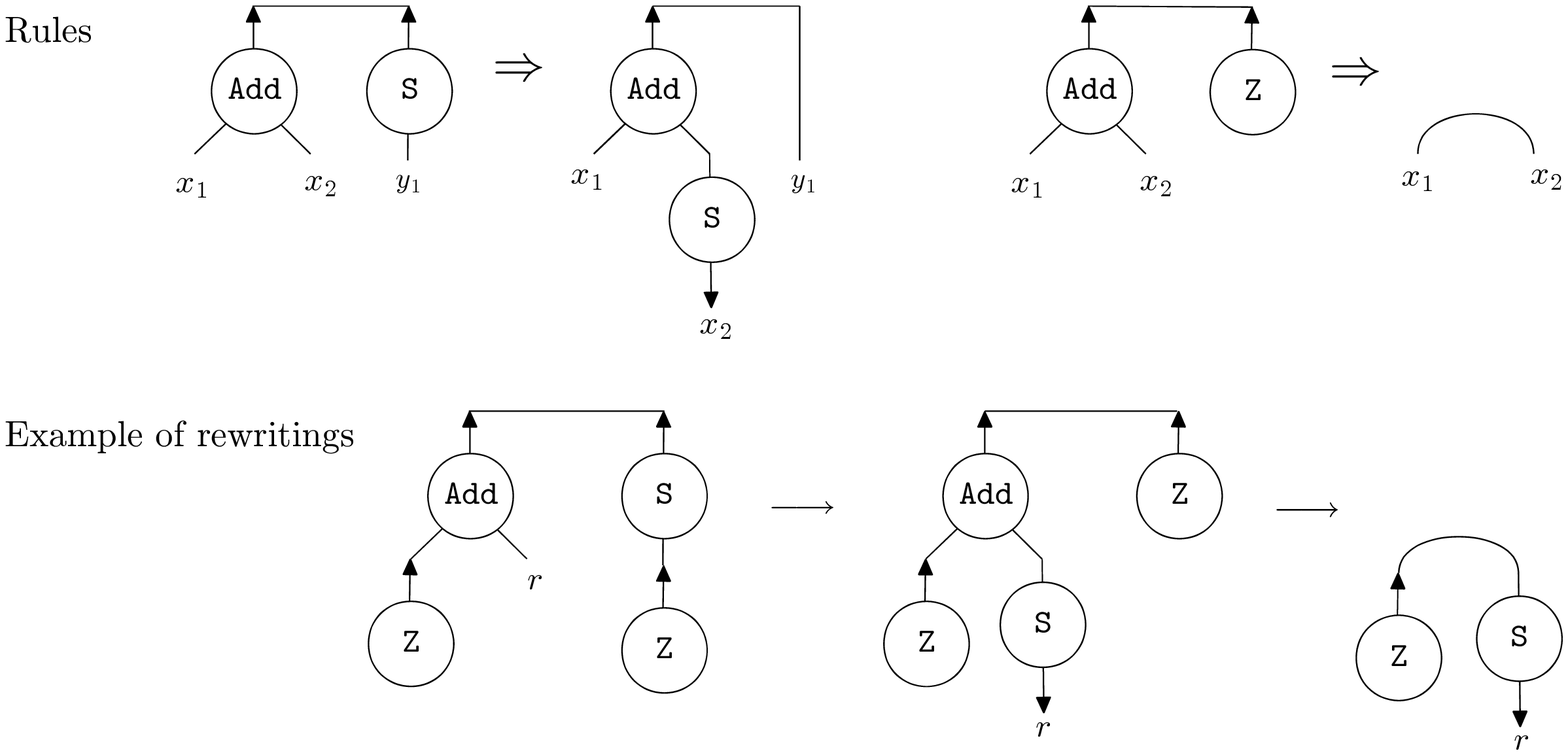}
\end{center}
\caption{An example of rules and rewritings of interaction nets}
\label{Fig:example_in}
\end{figure}

\begin{example}%\label{Example:light-rule-adds-addz}
Rules in Figure~\ref{Fig:example_in} can be represented as follows:

$\small
\sym{Add}(x_1, x_2)=\sym{S}(y) \ito \sym{Add}(x_1, w)=y, x_2=\sym{S}(w),
\qquad
\sym{Add}(x_1, x_2)=\sym{Z}  \ito  x_1=x_2.$

The net in Figure~\ref{Fig:example_in} is represented as $\conf{r}{\sym{Add}(\sym{Z},r)=\sym{S}(\sym{Z})}$, and it is performed:

$\small\begin{array}{lcl}
 \Config{r}{\sym{Add}(\sym{Z}, r)=\sym{S}(\sym{Z})}  
&\rtoInt &\Config{r}{\sym{Add}(\texttt{Z}, w')=\sym{Z}, r=\sym{S}(w')}
\rtoCol \Config{\sym{S}(w')}{\sym{Add}(\sym{Z}, w')=\sym{Z}}\\
&\rtoInt& \Config{\sym{S}(w')}{\sym{Z}=w'} \rtoCol \Config{\sym{S}(\sym{Z})}{}.
\end{array}$
\end{example}

We define $C_1 \CanonicalTo C_2$ by $C_1 \rto^* C_2$ where $C_2$ is in normal form. 
The following theorem shows that all Interaction rules can be performed without
using Substitution and Collect~\cite{Eval:HassanMS10}:
\begin{theorem}\label{theorem_s}
If $C_1 \CanonicalTo C_2$, 
then there is a configuration $C$ such that
$C_1 \rto^* C \rtoSub^* \rtoCol^* C_2$ and
$C_1$ is reduced to $C$ 
by applying only {\rm{}Communication} rule and 
{\rm{}Interaction} rule.
\QED
\end{theorem}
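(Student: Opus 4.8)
The plan is to prove a \emph{postponement} result: any reduction to normal form can be reorganised so that the two ``administrative'' rules, Substitution and Collect, are applied only after all Communication and Interaction steps. I would begin by isolating the key structural observation that drives everything: \textbf{Substitution and Collect create no new Communication or Interaction redexes.} Indeed, a Substitution step rewrites $\beta(\myvec{t})=u$ to $\beta(\myvec{t}[s/x])=u$ and deletes $x=s$; the modified equation still has the agent $\beta$ on the left and the unchanged term $u$ on the right, and the substituted material $s$ is buried inside an auxiliary position of $\beta$, so no fresh equation between two agents (an Interaction redex) and no fresh coincidence of a bare variable on two equation sides (a Communication redex) can appear. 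Likewise a Collect step only substitutes into the head and removes a bare-variable equation from the body, which can only destroy, never create, such coincidences. I would prove both facts by a direct inspection of the rule shapes, using the global ``each name occurs at most twice'' invariant to rule out spurious overlaps.

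From this I would derive a \textbf{local postponement lemma}: if $C \mathrel{\sigma} C' \mathrel{\tau} C''$ with $\sigma \in \{\rtoSub,\rtoCol\}$ and $\tau \in \{\rtoCom,\rtoInt\}$, then the $\tau$-redex already exists in $C$, so we may fire it first, $C \mathrel{\tau} D$, and then replay the administrative step to reach $C''$. Most cases give a clean diamond $C \mathrel{\tau} D \mathrel{\sigma} C''$. The one delicate case is when a Substitution feeds a variable $x$ directly into an argument of an active pair and $x$ lands, after the rule instance is instantiated, on a bare side of a right-hand-side equation: there the postponed Substitution cannot be replayed as a Substitution and must instead be realised as a single Communication step, $C \rtoInt D \rtoCom C''$. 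I would verify each case by computing both orders explicitly and checking the endpoints agree; the at-most-twice invariant again guarantees that the redexes involved are either independent or overlap only in the controlled way just described.

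The main obstacle, and the reason a naive induction on length (or on a count of mis-ordered adjacent pairs) fails, is precisely this conversion of a $(\rtoSub,\rtoInt)$ pair into an $(\rtoInt,\rtoCom)$ pair: it turns a tail step into a head step and so can \emph{increase} the number of head steps and the number of inversions with earlier tail steps. To get termination I would use a well-founded \textbf{multiset measure}: to a reduction sequence assign the multiset that contains, for each head step, the number of tail steps occurring to its left. Each clean swap lowers exactly one entry by one, and the problematic swap replaces a single entry $k$ by two entries equal to $k-1$; both are strict decreases in the Dershowitz--Manna multiset ordering, which is well founded. Hence iterated local postponement terminates, and it can only stop at a sequence with no tail step immediately preceding a head step, i.e.\ one of the form $C_1\,(\rtoCom\cup\rtoInt)^*\,C\,(\rtoSub\cup\rtoCol)^*\,C_2$.

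Finally I would clean up the tail. Within the $(\rtoSub\cup\rtoCol)^*$ segment after $C$, a Collect (acting on the head) and a Substitution (acting on a body agent) touch disjoint parts of the configuration and commute as a clean diamond, so a straightforward inversion count lets me push every Substitution before every Collect, yielding $C\rtoSub^*\rtoCol^* C_2$. This $C$ is then the configuration required by the statement, with $C_1$ reaching $C$ using only Communication and Interaction. Throughout, every rewriting step preserves the endpoints $C_1$ and $C_2$, so the reorganised sequence still witnesses $C_1 \CanonicalTo C_2$.
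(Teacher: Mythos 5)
The paper does not actually prove this theorem: it is stated with an immediate \QED{} and imported from the amineLight paper \cite{Eval:HassanMS10}, so there is no in-paper argument to compare yours against. Judged on its own, your postponement proof is sound and is essentially the argument one would expect to find in the cited source. You correctly identify the two load-bearing facts: that $\rtoSub$ and $\rtoCol$ never create a $\rtoCom$- or $\rtoInt$-redex (the modified equation keeps an agent on its left and an unchanged right side, and Collect only deletes from the body), and that the single non-trivial overlap is a Substitution into an argument position of an active pair that is exactly a bare variable whose parameter lands on a bare side of the rule's right-hand side, forcing the replay $C \rtoInt D \rtoCom C''$ rather than a clean diamond; the ``at most twice'' invariant is indeed what makes every other pair of adjacent steps independent, and $\alpha$-equivalence of bound names absorbs the choice of fresh instance. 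Your diagnosis that a naive inversion count fails, and the multiset fix, are both right; two small points worth tightening are (i) in the problematic swap the entries of \emph{all} head steps to the right of the swapped pair also drop by one (since a tail step disappears), which is still a Dershowitz--Manna decrease but should be stated, and (ii) a slightly simpler measure also works, namely the lexicographic pair (number of tail steps, number of tail-before-head inversions), since the problematic swap strictly decreases the first component. The final commutation of $\rtoCol$ past $\rtoSub$ is genuinely disjoint (head versus body, and the twice-occurrence invariant rules out the two consumed equations coinciding), so the tail can be sorted into $\rtoSub^*\rtoCol^*$ as you claim. Note also that your argument never uses that $C_2$ is a normal form, so you in fact prove a slightly stronger statement than the one asserted.
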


\noindent
We define $C_1 \CanonicalTo_{\rm{}ic} C_2$ by $C_1 \rto^* C_2$ where $C_2$ is a $\{\rtoInt,
\rtoCom\}-$normal form.  Because all critical pairs that are generated
by $\rtoInt$ and $\rtoCom$ are confluent, the determinacy property
holds~\cite{Eval:HassanMS10}:

\begin{theorem}[Determinacy]\label{theorem_determinacy1}
Let $C_1 \CanonicalTo C_2$. When 
there are configurations $C',C''$ 
such that $C_1 \CanonicalTo_{\rm{}ic} C'$ and
$C_1 \CanonicalTo_{\rm{}ic} C''$, then $C'$ is equivalent to $C''$. \QED
\end{theorem}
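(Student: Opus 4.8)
The plan is to derive determinacy from confluence of the combined one-step relation $\rtoInt \cup \rtoCom$, and to obtain that confluence by showing this relation has the \emph{diamond property} (one-step, strong confluence)---which is exactly the one-step confluence of interaction nets recalled in the background. Granting the diamond property, confluence follows by the standard tiling induction, and a confluent relation has at most one normal form reachable from a given term. Since $C'$ and $C''$ are assumed to be $\{\rtoInt,\rtoCom\}$-normal forms with $C_1 \CanonicalTo_{\mathrm{ic}} C'$ and $C_1 \CanonicalTo_{\mathrm{ic}} C''$, confluence yields a common reduct, which must coincide with each of $C'$ and $C''$ (neither admits any further $\rtoInt$ or $\rtoCom$ step), so $C'$ is equivalent to $C''$. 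Note that the hypothesis $C_1 \CanonicalTo C_2$ is used only to guarantee that the normal forms $C'$ and $C''$ exist; the substance of the argument is the local analysis of redexes.

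The key step is a disjointness lemma: in any configuration, two distinct $\{\rtoInt,\rtoCom\}$-redexes are supported by disjoint sets of equations of the body. This rests on the invariant that every name occurs at most twice in a configuration. An $\rtoInt$-redex is a single equation $\alpha(\myvec{t}) = \beta(\myvec{s})$, and by the at-most-one-rule condition its rewrite is fixed up to the choice of fresh names. A $\rtoCom$-redex on a name $x$ consists of the two equations $x = t$ and $x = s$ that carry the two occurrences of $x$ as left-hand sides; since $x$ occurs at most twice, these are its only occurrences, so $x$ appears in no other equation. I would then check the three combinations. Two $\rtoCom$-redexes on distinct names use the disjoint equation-pairs carrying the two occurrences of each name. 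An $\rtoInt$- and a $\rtoCom$-redex cannot share an equation, as one has a bare name as a left-hand side while the other has an agent on both sides, and the communication name cannot occur inside the active-pair equation. Two distinct $\rtoInt$-redexes are simply distinct equations. In every case the redexes occupy disjoint equations.

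Given disjointness, the diamond closes in one step on each side: reducing one redex leaves the equations of the other syntactically untouched, so the second redex persists and, performed in either order, produces the same configuration. The only care needed is with the fresh names introduced by $\rtoInt$: by choosing the fresh names consistently across the two reduction orders (or by $\alpha$-renaming afterwards) the two resulting configurations are identified up to $\alpha$-equivalence, which is what ``equivalent'' means here. This gives the diamond property, and the conclusion follows as in the first paragraph.

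I expect the main obstacle to be making the critical-pair enumeration genuinely exhaustive and handling the fresh-name bookkeeping in the interaction cases carefully, so that the two orders yield $\alpha$-equivalent rather than merely isomorphic configurations; I would isolate this as a renaming/substitution lemma stating that disjoint $\rtoInt$ rewrites commute modulo $\alpha$-equivalence. The remaining cases are then routine, since disjointness guarantees each residual redex stays enabled after the other is fired.
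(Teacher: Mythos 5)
Your overall strategy (uniqueness of $\{\rtoInt,\rtoCom\}$-normal forms via confluence of the combined one-step relation, established by a local analysis of pairs of redexes) is exactly the route the paper takes --- indeed the paper gives no details at all, saying only that ``all critical pairs generated by $\rtoInt$ and $\rtoCom$ are confluent'' and citing the amineLight paper. But your key disjointness lemma is false, and this is precisely where the real content of the critical-pair analysis lives. Two $\rtoCom$-redexes on distinct names can share an equation, because the term $t$ in an equation $x=t$ may itself be a name. Take the body $x=y,\ x=s,\ y=u$ (each of $x,y$ occurs exactly twice, so this is a legal configuration). Communication on $x$ consumes the equations $x=y$ and $x=s$; Communication on $y$ consumes $y=x$ (the same unordered pair as $x=y$) and $y=u$. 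The two redexes overlap on the shared equation $x=y$, so your claim that ``reducing one redex leaves the equations of the other syntactically untouched'' fails: after firing one redex the other redex no longer exists in its original form. Concretely, one order gives $y=s,\ y=u$ and the other gives $x=s,\ x=u$, which are different configurations; they only rejoin after one \emph{further} $\rtoCom$ step each, at $s=u$. So the diamond in the strict form you state (equal results after one step on each side) does not hold.

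The theorem itself is not in danger: the overlap above is a genuine critical pair, and it is joinable with one additional step on each side, so the relation is strongly confluent in the sense $b \leftarrow a \rightarrow c$ implies $b \rightarrow^{=} d \leftarrow^{=} c$, which still yields confluence without any termination hypothesis and hence uniqueness of $\{\rtoInt,\rtoCom\}$-normal forms. To repair your proof you must drop the disjointness lemma, enumerate the overlapping cases explicitly (the $\rtoCom$/$\rtoCom$ overlap on a name--name equation is the essential one; your arguments that $\rtoInt$-redexes cannot overlap with anything are fine), and weaken the diamond to this one-extra-step form. That is exactly the ``check all critical pairs'' argument the paper delegates to its reference, so once patched your proof coincides with the intended one rather than offering a shortcut around it.
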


\subsection{Simpler calculus}
In Lightweight calculus, equations are defined as unordered pairs and
configurations use multisets of equations.  Here, in order to
facilitate the correspondence between the calculus and an
implementation model that has a code stack and an environment such as
SECD-machine~\cite{Landin64}, we introduce another refined calculus of
Lightweight one, called \emph{Simpler calculus}, by changing the
definition of equations into ordered pairs and in configurations
multisets of equations into sequences of ones.

\begin{description}
\item[Terms:] A term is built on $\Sigma$ and $\NameSet$ by the grammar: 
$t\; \mathrm{::=}\; x \mid \alpha(t_1,\ldots,t_n) \mid \ind{t} $.
Intuitively, $\ind{t}$ corresponds to a variable bounded with $t$ (or a state such that an environment captures $t$).

\item[Equations:] If $t$ and $u$ are terms, then the ordered pair
  $t\EQ u$ is an \emph{equation}. $\Theta$ will be
  used to range over sequences of equations.  

\item[Rules:] Rules are pairs of terms written:
$\alpha(x_1, \ldots, x_n) \EQ \beta(y_1, \ldots, y_m) \ito \Theta$.

\item[Configurations:]
A \emph{configuration} is a pair $(\IR,
\confSimple{\myvec{t}}{\Theta})$, where $\Theta$ is a sequence of
equations. 
The rules set $\IR$ contains at most one rule between any pair of agents, and it is closed under symmetry.
We use $S,S'$ to range over configurations.  
\end{description}

\begin{definition}[Computation Rules] 
  The operational behaviour of the system is given by the following:

\begin{small}
\begin{tabular}{ll}
\bigskip
\textbf{Var1:} $\confSimple{\vec{u}}{\Theta, x=t} \lra 
  \confSimple{\vec{u}}{\Theta}[\ind{t}/x]$ \  where $t \not= \ind{s}$.
&\quad
\textbf{Var2:} $\confSimple{\vec{u}}{\Theta, t=x} \lra 
  \confSimple{\vec{u}}{\Theta}[\ind{t}/x]$ \  where $t \not= \ind{s}$. \\

\smallskip\textbf{Indirection1:} $\confSimple{\vec{u}}{\Theta, \ind{t} \EQ s} 
  \lra \confSimple{\vec{u}}{\Theta, t \EQ s}$.
&
\quad
\textbf{Indirection2:} $\confSimple{\vec{u}}{\Theta, t \EQ \ind{s}} 
  \lra \confSimple{\vec{u}}{\Theta, t \EQ s}$.\\
& \\

\multicolumn{2}{l}{\textbf{Interaction:} $\alpha(x_1,\ldots,x_n) =
  \beta(y_1,\ldots,y_m) \ito \Theta_r \in \IR$, then}\\

\multicolumn{2}{l}{
\bigskip\quad
$\confSimple{\vec{u}}
{\Theta, \alpha(t_1,\ldots,t_n) = \beta(s_1,\ldots,s_m)} \lra 
\confSimple{\vec{u}}
{\Theta, \widehat{\Theta_r}[t_1/x_1,\ldots, t_n/x_n, s_1/y_1, \ldots, s_m/y_m]}.$}
\end{tabular}
\end{small}
\end{definition}

To remove indirection terms, we introduce an operation \Func{remInd}
(it is extended to sequences of terms and configurations in the obvious way):

\noindent
\begin{small}
$\begin{array}{lcl}
\Func{remInd}(x)  \defeq  x,&
\Func{remInd}(\ind{t})  \defeq  \Func{remInd}(t),&
\Func{remInd}(\alpha(t_1,\ldots,t_n))  \defeq  \alpha(\Func{remInd}(t_1),\ldots,\Func{remInd}(t_n)).
\end{array}
$
\end{small}

\begin{example}
We show the computation of the configuration
$\confSimple{r}{\sym{Add}(r,\sym{Z})=\sym{S}(\sym{Z})}$ in
Figure~\ref{Fig:example_in}:

\begin{small}
$\confSimple{r}{\sym{Add}(\sym{Z},r)=\sym{S}(\sym{Z})}
\sto \confSimple{r}{\sym{Add}(\sym{Z},x)=\sym{Z}, r=\sym{S}(x)}
\sto \confSimple{\ind{\sym{S}(x)}}{\sym{Add}(\sym{Z},x)=\sym{Z}}$

\quad
$\sto \confSimple{\ind{\sym{S}(x)}}{\sym{Z}=x}
\sto \confSimple{\ind{\sym{S}(\ind{\sym{Z}})}}{}.$

$\Func{remInd} \confSimple{\ind{\sym{S}(\ind{\sym{Z}})}}{}
 = \confSimple{\sym{S}(\sym{Z})}{}.$
\end{small}
\end{example}

These rules correspond directly to the graphical data-structure and
operations given in the next section. Indirection is introduced so
that the data-structure manipulations can be kept simple. However,
there is an overhead of dealing with indirection nodes.
Computationally the interaction rule is the most expensive: the other
rules will turn out to be implemented with a small number of
instructions or will be equivalences in the data-structure.

\subsection{Expressive power}
We compare the expressive power of Simpler and Lightweight calculi. We
define a translation of configurations from Simpler calculus into the
Lightweight one as: $ \Func{ToLight}(\confSimple{\vec{t}}{\Theta})
\defeq \conf{\Func{remInd}(\vec{t})}{\Func{remInd}(\Theta)}$.

\begin{lemma}\label{lemma:simple-simulation}
Let $S_1$ and $S_2$ be configurations such that $S_1 \lra S_2$. When it is by Indirection1 or Indirection2 rules, $\Func{ToLight}(S_1) = \Func{ToLight}(S_2)$. Otherwise, $\Func{ToLight}(S_1) \rto \Func{ToLight}(S_2)$.
\end{lemma}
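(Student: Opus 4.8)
The plan is to argue by case analysis on which of the five computation rules of the Simpler calculus justifies $S_1 \lra S_2$, after first isolating the one technical fact that drives every case: that $\Func{remInd}$ commutes with substitution of a term for a name, i.e.\ $\Func{remInd}(w[u/x]) = \Func{remInd}(w)[\Func{remInd}(u)/x]$ for every term $w$, and likewise for sequences of terms and for equations. This is a routine structural induction on $w$, the only nontrivial clause being $w = \ind{w'}$, where both sides collapse to $\Func{remInd}(w'[u/x])$ once the defining equation $\Func{remInd}(\ind{w'}) = \Func{remInd}(w')$ is unfolded. Alongside it I would record two standing facts used silently throughout: each name occurs at most twice in a configuration, and a rule body carries no indirection nodes, so the Simpler rule $\Theta_r$ and the Lightweight rule $\Delta_r$ agree once the ordering of equations is forgotten; note also that $\Func{ToLight}$ itself sends the sequence $\Theta$ to its underlying multiset.

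Two of the cases are then immediate or nearly so. For \textbf{Indirection1} and \textbf{Indirection2}, since $\Func{remInd}(\ind{t}) = \Func{remInd}(t)$ the equations $\ind{t} \EQ s$ and $t \EQ s$ have the same image, so $\Func{ToLight}(S_1) = \Func{ToLight}(S_2)$ directly. For \textbf{Interaction} I would push $\Func{remInd}$ through the active pair and through the instantiated right-hand side: using the commutation fact on the substitution $[t_1/x_1,\ldots,s_m/y_m]$ together with the fact that $\Func{remInd}$ fixes the parameter names $x_i, y_j$, the translated redex $\alpha(\Func{remInd}(t_1),\ldots) \EQ \beta(\Func{remInd}(s_1),\ldots)$ reduces under the Lightweight \textbf{Interaction} rule to precisely $\Func{ToLight}(S_2)$, working modulo $\alpha$-equivalence so that the fresh names chosen by the two instance operations can be identified.

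The substance of the lemma lies in \textbf{Var1} (and its mirror \textbf{Var2}), where $S_1 = \confSimple{\vec{u}}{\Theta, x \EQ t}$ with $t \neq \ind{s}$ and $S_2 = \confSimple{\vec{u}}{\Theta}[\ind{t}/x]$. Here a single Simpler step has to be matched by one of \emph{three different} Lightweight rules, chosen according to where the second occurrence of the bound name $x$ sits --- an occurrence the occurs-twice invariant makes unique. If it lies in the head $\vec{u}$ I would simulate the step by \textbf{Collect}; if it is a further lone-variable equation $x \EQ s$ (or $s \EQ x$) in $\Theta$, by \textbf{Communication}; and if it is buried inside an agent term $\beta(\vec{w})$ in $\Theta$, by \textbf{Substitution}. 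In every subcase the target produced by the chosen Lightweight rule is identified with $\Func{ToLight}(S_2)$ by rewriting each $\Func{remInd}(\,\cdot\,[\ind{t}/x])$ as $\Func{remInd}(\,\cdot\,)[\Func{remInd}(t)/x]$ through the commutation fact and using that substitution distributes through agents. I expect this dispatch to be the main obstacle: it is the one place where the correspondence is genuinely non-uniform, and it demands care that the unique second occurrence be classified correctly --- the \textbf{Communication} subcase in particular relying on the fact that Lightweight equations are unordered, so that $\Func{remInd}(s) \EQ \Func{remInd}(t)$ and $\Func{remInd}(t) \EQ \Func{remInd}(s)$ coincide. \textbf{Var2} then follows verbatim with the two sides of the equation interchanged.
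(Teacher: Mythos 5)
Your proposal is correct and takes essentially the same approach as the paper: a case analysis on the Simpler-calculus rule applied, with the Indirection cases absorbed by $\Func{remInd}$ and the Var cases dispatched to a Lightweight rule according to where the unique second occurrence of $x$ sits. In fact your treatment is more complete than the paper's own one-line proof, which spells out only Var1 and asserts that the matching step is $\rtoCom$ or $\rtoSub$, omitting the $\rtoCol$ subcase (second occurrence in the head $\vec{u}$) that you correctly include; the commutation of $\Func{remInd}$ with substitution that you isolate as the driving fact is likewise left implicit there.
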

\begin{proof}
In the case of Var1: $S_1 = \confSimple{\vec{u}}{\Theta, x=t} \lra 
\confSimple{\vec{u}}{\Theta}[\ind{t}/x] = S_2$. 
When we assume $\Func{ToLight}(S_1) = \conf{\vec{u'}}{\Theta', x=t'}$,
then $\Func{ToLight}(S_2) = \conf{\vec{u'}}{\Theta'}[t'/x]$, and thus
$\Func{ToLight}(S_1)\rtoCom\Func{ToLight}(S_2)$ or 
$\Func{ToLight}(S_1)\allowbreak\rtoSub\Func{ToLight}(S_2)$.
\QED
\end{proof}

Every equation is reduced by a rule in Simpler calculus, the following holds:
\begin{lemma}\label{lemma:simple-normalform}
When $S_1 \CanonicalTo \confSimple{\vec{u}}{\Theta}$, then $\Theta$ is empty. \QED
\end{lemma}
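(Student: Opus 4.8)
The plan is to reduce the lemma to a purely local, per-equation statement: I would show that \emph{every} equation $t \EQ u$ that can occur in a configuration of the Simpler calculus is reducible by at least one of the five computation rules (Var1, Var2, Indirection1, Indirection2, Interaction). Since $\confSimple{\vec{u}}{\Theta}$ is obtained via $\CanonicalTo$, it is by definition a normal form, so no rule applies to it; combined with the reducibility claim this forces $\Theta$ to contain no equation, i.e. to be empty. Thus the whole argument rests on a single reducibility lemma together with the definition of normal form.

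First I would fix an arbitrary equation $t \EQ u$ appearing in $\Theta$ and do a case analysis on the syntactic shapes of $t$ and $u$, recalling that by the Simpler-calculus grammar each side is a name, an agent $\alpha(\ldots)$, or an indirection $\ind{s}$. The cases must be taken in the right order so that the side conditions on Var1 and Var2 are met. If either side is an indirection, Indirection1 (for the left) or Indirection2 (for the right) rewrites it, so the equation reduces. Otherwise neither side is an indirection; if the left side is a name then Var1 applies, its side condition $u \neq \ind{s}$ being satisfied precisely because $u$ is not an indirection, and symmetrically Var2 applies when the right side is a name. The only remaining possibility is that both sides are agents, $\alpha(t_1,\ldots,t_n) \EQ \beta(s_1,\ldots,s_m)$, which is an active pair handled by the Interaction rule. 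This exhausts all shapes, so every equation is reducible.

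The hard part will be the agent--agent case, since Interaction fires only when a matching rule $\alpha(\vec{x}) \EQ \beta(\vec{y}) \ito \Theta_r$ actually belongs to $\IR$; if no rule exists for that pair the equation is genuinely stuck and would survive in a normal form with nonempty body. To close this case I would appeal to the standing assumption of the interaction-net framework that $\IR$ provides a rule for every active pair that can arise, so that no active pair is ever stuck and the case analysis is truly exhaustive. Under this assumption the per-equation reducibility is complete, and the lemma follows in one line: a normal form admits no applicable rule, yet by the dichotomy above any equation would admit one, so the body can hold no equations and $\Theta$ is empty.
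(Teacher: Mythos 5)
Your proof is correct and follows essentially the same route as the paper, which justifies this lemma only by the one-line remark that ``every equation is reduced by a rule in Simpler calculus'' and marks it \QED{} without further argument; your case analysis on the syntactic shapes of the two sides of an equation, ordered so that the side conditions of Var1/Var2 are met after the indirection cases are dispatched, is exactly the missing content behind that remark. You are also right to single out the agent--agent case: the paper only stipulates that $\IR$ contains \emph{at most} one rule per pair of agents, so the totality assumption you invoke there (a rule for every active pair that can arise) is genuinely needed for the lemma to hold and is left implicit in the paper.
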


We define a translation of configurations
$\Func{ToSimple}$ from Lightweight calculus into Simpler ones:

\noindent
$\begin{array}{l}
\Func{ToSimple}(\conf{\vec{t}}{\Delta}) \defeq \confSimple{\vec{t}}{\Theta}
\ \text{where $\Theta$ is a sequence that is the result of fixing an order of the multiset $\Delta$.}
\end{array}$

\begin{theorem}
Let $C$ be a configuration in Lightweight calculus. 
When there is a configuration $S$ in Simpler one such that
$\Func{ToSimple}(C) \CanonicalTo S$,
then $C \CanonicalTo \Func{ToLight}(S)$.
\end{theorem}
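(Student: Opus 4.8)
The plan is to lift the given Simpler-calculus reduction $\Func{ToSimple}(C) \CanonicalTo S$ step-by-step into a Lightweight-calculus reduction by means of Lemma~\ref{lemma:simple-simulation}, and then to argue that its endpoint $\Func{ToLight}(S)$ is already a Lightweight normal form. First I would unfold the hypothesis: $\Func{ToSimple}(C) \CanonicalTo S$ means there is a finite reduction $\Func{ToSimple}(C) = S_0 \sto S_1 \sto \cdots \sto S_k = S$ in the Simpler calculus with $S$ in normal form.

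The core of the argument is an induction on the length $k$ of this reduction, pushing it through $\Func{ToLight}$. Lemma~\ref{lemma:simple-simulation} tells me that for each step $S_i \sto S_{i+1}$ either the step is an Indirection1/Indirection2 rule, in which case $\Func{ToLight}(S_i) = \Func{ToLight}(S_{i+1})$, or it is a Var1, Var2, or Interaction step, in which case $\Func{ToLight}(S_i) \rto \Func{ToLight}(S_{i+1})$ in the Lightweight calculus. Chaining these and absorbing the equalities, I obtain $\Func{ToLight}(S_0) \rto^* \Func{ToLight}(S)$. It then remains to identify the source $\Func{ToLight}(S_0) = \Func{ToLight}(\Func{ToSimple}(C))$ with $C$: since $C$ is a Lightweight configuration it contains no indirection nodes, so $\Func{remInd}$ acts as the identity on its terms, and $\Func{ToSimple}$ merely fixes an order on the multiset $\Delta$; hence $\Func{ToLight}(\Func{ToSimple}(C)) = C$, giving $C \rto^* \Func{ToLight}(S)$.

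Finally I would show that $\Func{ToLight}(S)$ is a normal form in the Lightweight calculus. Because $S$ is a Simpler normal form, Lemma~\ref{lemma:simple-normalform} forces its body to be empty, say $S = \confSimple{\vec{u}}{}$, so that $\Func{ToLight}(S) = \conf{\Func{remInd}(\vec{u})}{}$ also has an empty body. Each of the four Lightweight rules (Communication, Substitution, Collect, Interaction) requires at least one equation in the body in order to fire, so none applies and $\Func{ToLight}(S)$ is irreducible. Combining this with $C \rto^* \Func{ToLight}(S)$ yields $C \CanonicalTo \Func{ToLight}(S)$ by the definition of $\CanonicalTo$.

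I expect the main obstacle to be the bookkeeping in the inductive lifting step, specifically checking that the two cases of Lemma~\ref{lemma:simple-simulation} compose correctly: the indirection steps must collapse to genuine equalities (so that they contribute no spurious Lightweight steps), while each of the remaining steps contributes exactly one Lightweight reduction. A secondary point to treat with care is the identification $\Func{ToLight} \circ \Func{ToSimple} = \mathrm{id}$ on indirection-free configurations, since $\Func{ToSimple}$ replaces a multiset by a sequence and one must confirm this reordering is immaterial in the Lightweight calculus, whose body is a multiset. Note that Determinacy (Theorem~\ref{theorem_determinacy1}) is not actually required for this existence statement, although it does guarantee that the resulting normal form is canonical.
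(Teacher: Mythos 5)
Your proof is correct and takes essentially the same route as the paper's: both rest on Lemma~\ref{lemma:simple-normalform} to conclude that the body of $S$ is empty (so $\Func{ToLight}(S)$ is a Lightweight normal form) and on Lemma~\ref{lemma:simple-simulation} to lift the Simpler reduction sequence to a Lightweight one starting from $C$. You merely make explicit the induction on the length of the reduction and the identity $\Func{ToLight}(\Func{ToSimple}(C)) = C$, both of which the paper's two-line proof leaves implicit.
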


\begin{proof}
Assume $S=\confSimple{\vec{u}}{\Theta}$, and then $\Theta$ is empty 
by Lemma~\ref{lemma:simple-normalform}.
Since $\Func{ToLight}\confSimple{\vec{u}}{}$ is a normal form,
$C \CanonicalTo \Func{ToLight}\confSimple{\vec{u}}{}$
by Lemma~\ref{lemma:simple-simulation}.  \QED
\end{proof}

\begin{theorem}\label{theorem:simple1}
Let $C_1$ and $C_2$ be configurations in Lightweight calculus such that $C_1 \CanonicalTo_{\rm{}ic} C_2$. 
Then there is a configuration $S$ in Simpler one such that 
$\Func{ToSimple}(C_1) \CanonicalTo S$ and 
$C_2 \CanonicalTo \Func{ToLight}(S)$.
\end{theorem}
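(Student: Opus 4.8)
The plan is to produce the witness $S$ as a Simpler-calculus normal form reachable from $\Func{ToSimple}(C_1)$, then to identify $\Func{ToLight}(S)$ with the full Lightweight normal form of $C_1$, so that reaching it from $C_2$ becomes a confluence argument.

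First I would argue that $\Func{ToSimple}(C_1)$ has a normal form $S$ in the Simpler calculus, i.e.\ $\Func{ToSimple}(C_1) \CanonicalTo S$. The hypothesis $C_1 \CanonicalTo_{\rm{}ic} C_2$ ensures that only finitely many Interaction and Communication steps occur on the way from $C_1$ to an ic-normal form, while the Substitution and Collect steps that complete $C_2$ to a full normal form terminate and cannot recreate an Interaction or Communication redex; hence $C_1$ possesses a full Lightweight normal form, with a bounded interaction count by one-step confluence. Mapping Simpler reductions to Lightweight ones through Lemma~\ref{lemma:simple-simulation}, every Var and Interaction step projects to a genuine Lightweight step while the Indirection steps project to equalities; since the projected reductions are bounded and each Simpler step inserts only finitely many indirection nodes, a suitable termination measure shows the Simpler reduction from $\Func{ToSimple}(C_1)$ must reach a normal form $S$.

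Next I would invoke the preceding theorem directly. Since $\Func{ToSimple}(C_1) \CanonicalTo S$, that theorem yields $C_1 \CanonicalTo \Func{ToLight}(S)$, and by Lemma~\ref{lemma:simple-normalform} the body of $S$ is empty, so $\Func{ToLight}(S)$ has empty body and is therefore a full Lightweight normal form reached by $C_1 \rto^{*} \Func{ToLight}(S)$. Finally I would derive $C_2 \CanonicalTo \Func{ToLight}(S)$. Being a full normal form, $\Func{ToLight}(S)$ is \emph{a fortiori} a $\{\rtoInt,\rtoCom\}$-normal form, so $C_1 \CanonicalTo_{\rm{}ic} \Func{ToLight}(S)$ as well. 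Determinacy (Theorem~\ref{theorem_determinacy1}) then makes the two ic-normal forms $C_2$ and $\Func{ToLight}(S)$ equivalent; since $\Func{ToLight}(S)$ is already a full normal form, completing the residual Substitution and Collect reductions from $C_2$ lands exactly on $\Func{ToLight}(S)$, giving $C_2 \CanonicalTo \Func{ToLight}(S)$ and closing the proof.

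The step I expect to be the main obstacle is the first one: establishing that the Simpler calculus actually normalizes on $\Func{ToSimple}(C_1)$. The Var1/Var2 rules introduce indirection nodes $\ind{t}$ that the Indirection rules must later peel away, so a naive count of equations or variables need not decrease monotonically; the honest route is a lexicographic measure whose primary component is the bounded number of Lightweight Interaction and Communication steps obtained via Lemma~\ref{lemma:simple-simulation} and whose secondary component bounds the Indirection bookkeeping. One must also check the benign point that $\Func{ToLight}(\Func{ToSimple}(C_1))$ coincides with $C_1$ up to removal of indirections, so that the preceding theorem genuinely applies to $C_1$ itself.
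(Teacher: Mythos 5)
Your argument is correct and matches the paper's skeleton in its first two steps. The paper also establishes that $\Func{ToSimple}(C_1)$ normalizes by projecting through Lemma~\ref{lemma:simple-simulation}---noting, as you do, that each step creates at most one indirection equation, so an infinite Simpler reduction would force an infinite Lightweight reduction from $C_1$---though it phrases this as a contradiction where you phrase it as a termination measure (yours is in fact slightly more explicit about why $C_1$ itself must be normalizing). It then invokes the preceding theorem to obtain $C_1 \CanonicalTo \Func{ToLight}(S)$, exactly as you propose. Where you genuinely diverge is the closing step. The paper applies Theorem~\ref{theorem_s} to factor $C_1 \CanonicalTo \Func{ToLight}(S)$ as $C_1 \CanonicalTo_{\rm ic} C_3$ followed by Substitution/Collect steps into $\Func{ToLight}(S)$, and only then uses determinacy (Theorem~\ref{theorem_determinacy1}) to identify $C_3$ with $C_2$, so that $C_2 \CanonicalTo \Func{ToLight}(S)$ is witnessed by an explicit residual $\rtoSub^{*}\rtoCol^{*}$ reduction. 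You instead note that $\Func{ToLight}(S)$, having empty body by Lemma~\ref{lemma:simple-normalform}, is itself an ic-normal form of $C_1$, and apply determinacy directly to $C_2$ and $\Func{ToLight}(S)$. This is shorter and dispenses with Theorem~\ref{theorem_s} entirely, but it leans harder on what ``equivalent'' means in the determinacy theorem: $C_2$ may still carry pending Substitution/Collect redexes while $\Func{ToLight}(S)$ carries none, so the two need not be syntactically equal, and your final sentence (``completing the residual reductions lands exactly on $\Func{ToLight}(S)$'') is precisely the content that the paper's detour through $C_3$ supplies constructively. Under the paper's own (equally informal) reading of equivalence both closings go through, so this is a stylistic trade-off rather than a gap.
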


\begin{proof}
If $\Func{ToSimple}(C_1)$ has no normal form, corresponding to an
infinite reduction sequence from $\Func{ToSimple}(C_1)$
we can construct an infinite reduction sequence starting from $C_1$ 
by Lemma~\ref{lemma:simple-simulation} since each reduction produces at most one equation such as $\ind{t}=s$ or $t=\ind{s}$.
This contradicts the assumption of this theorem. 
There is, thus, a configuration $S$ such that  
$\Func{ToSimple}(C_1) \CanonicalTo S$. 
By Theorem~\ref{theorem:simple1} $C_1 \CanonicalTo \Func{ToLight}(S)$,
and thus there is a configuration $C_3$ such that 
$C_1 \CanonicalTo_{\rm{}ic} C_3$ and $C_3 \CanonicalTo \Func{ToLight}(S)$
by Theorem~\ref{theorem_s}.
By the assumption $C_1 \CanonicalTo_{\rm{}ic} C_2$ and 
the determinacy (Theorem~\ref{theorem_determinacy1}), $C_3=C_2$. \QED
\end{proof}

We define a configuration of our abstract machine state by the
following 3-tuple $\amYAA{\set{E}}{\myvec{t}}{\Code}$, where
\begin{itemize}
\item $\set{E}$ is an environment, which is a subset of $\NameSet\times\TermSet$ ($\NameSet$ is a set of names, $\TermSet$ is the set of terms),
\item $\myvec{t}$ is an interface, which is a sequence of terms,
\item $\Code$ is a sequence of equations to operate. 
\end{itemize}

\noindent
In contrast to the SECD machine~\cite{Landin64}, the stack $S$, the
environment $E$ and the control $C$ in the machine correspond to the
term sequence $\myvec{t}$, the map $\Env$, and the equation
sequence $\Code$ in this abstract machine respectively.  There is no
element corresponding to the dump $D$ in SECD machine because, during
an execution of a rule, other rules are not called. 
To manage the environment, we define the following.

\begin{definition}[Operations for pairs]\label{Definition:pairs-and-operations}
Let $\set{P}$ be a set of pairs. 
\begin{itemize}
\item We define a map $\set{P}$ as a set of pairs:
$\set{P}(n) \defeq m  \mbox{ if $(n,m) \in \set{P}$}$, $\bot$ otherwise.

\item We use the following notations to operate maps:
\begin{small}
\begin{itemize}
\item $\set{P}(n):=\bot$ \quad as  a set \quad$(\set{P} - \{(n,m)\})$ for any $m$,
\item $\set{P}(n):=m$ \quad as a set \quad $(\set{P}[n]:=\bot) \cup \{(n,m)\})$.
\end{itemize}
\end{small}

\end{itemize}
\end{definition}

We give the semantics of the machine
as a set of the following transitional rules of the form
$\amYAA{\set{E}}{\myvec{u}}{\Code}\Sequential
\amYAA{\set{E}'}{\myvec{u}}{\Code'}$ by applying in the order from $\mathrm{A}$ to $\mathrm{C2}$:
$$\small
\begin{array}{rlcl}
\mathrm{A:} & 
\amYAA{\set{E}}{\myvec{u}}{\Theta, \alpha(\myvec{t})=\beta(\myvec{s})} 
& \Sequential & \amYAA{\set{E}}{\myvec{u}}{\Theta, \Theta_1}
\quad \text{where $\confSimple{}{\alpha(\myvec{t})=\beta(\myvec{s})} \lra \confSimple{}{\Theta_1}$}\\
\mathrm{B1:} &
\amYAA{\set{E}(x)=\bot}{\myvec{u}}{\Theta, x=t} 
& \Sequential & \amYAA{\set{E}(x):=t}{\myvec{u}}{\Theta} \\
\mathrm{B2:} & \amYAA{\set{E}(x)=\bot}{\myvec{u}}{\Theta, t=x} 
& \Sequential & \amYAA{\set{E}(x):=t}{\myvec{u}}{\Theta} \\
\mathrm{C1:} & \amYAA{\set{E}(x)=s}{\myvec{u}}{\Theta, x=t} 
& \Sequential & \amYAA{\set{E}(x):=\bot}{\myvec{u}}{\Theta, s=t}\\
\mathrm{C2:} & \amYAA{\set{E}(x)=s}{\myvec{u}}{\Theta, t=x} 
& \Sequential & \amYAA{\set{E}(x):=\bot}{\myvec{u}}{\Theta, t=s}
\end{array}$$

Intuitively, the rule `A' corresponds Interaction rule,
`B1' and `B2' correspond Var1 and Var2,
and `C1' and `C2' correspond Indirection1 and Indirection2.
To force captured terms in the environment to be replaced when the execution is finished, 
we define $\Func{Update}$ operation:
$$
\small
\begin{array}{lcl}
\Func{Update}\amYAA{\set{E}\cup\{(x,s)\}}{\myvec{u}}{\Code}
&=&\left\{
\begin{array}{ll}
\Func{Update}\amYAA{\set{E}[s/x]}{\myvec{u}[s/x]}{\Code[s/x]} & 
\mbox{(when $x$ occurs in $\set{E}, \myvec{u}$ or $\Code$)}\\
\Func{Update}\amYAA{\set{E}}{\myvec{u}}{\Code,x=s} & 
\mbox{(otherwise)}\\
\end{array}
\right.
\\
\Func{Update}\amYAA{\emptyset}{\myvec{u}}{\Code}
&=&\confSimple{\myvec{u}}{\Code}
\end{array}
$$

\begin{example}\label{Example:light-rule-adds-addz}
A configuration
$\confSimple{r}{\sym{Add}(\sym{Z},r)=\sym{S}(\sym{Z})}$ 
which represents the net 
in Figure~\ref{Fig:example_in} is performed:

\begin{small}
\noindent{}$\amYAA{\emptyset}{r}{\sym{Add}(\sym{Z},r)=\sym{S}(\sym{Z})}
\Sequential \amYAA{\emptyset}{r}{\sym{Add}(\sym{Z},x)=\sym{Z}, r=\sym{S}(x)}
\Sequential \amYAA{\{(r,\sym{S}(x))\}}{r}{\sym{Add}(\sym{Z},x)=\sym{Z}}$

\quad$\amYAA{\{(r,\sym{S}(x))\}}{r}{\sym{Z}=x}
\Sequential \amYAA{\{(r,\sym{S}(x)), (x,\sym{Z})\} }{r}{\Empty}.$

\noindent{}$\Func{Update}\amYAA{\{(r,\sym{S}(x)), (x,\sym{Z})\} }{r}{\Empty}
= \Func{Update}\amYAA{\{ (r,\sym{S}(\sym{Z})) \}}{ r }{\Empty }  
= \Func{Update}\amYAA{\emptyset}{\sym{S}(\sym{Z})}{\Empty }
= \confSimple{\sym{S}(\sym{Z})}{}.$
\end{small}
\end{example}

%\section{Data structures and language}\label{sec:data}
\section{Data-structures and language}\label{sec:data}

\begin{wrapfigure}[6]{R}{6.4cm}
\vspace{-4mm}
\includegraphics[width=6.2cm,keepaspectratio,clip]{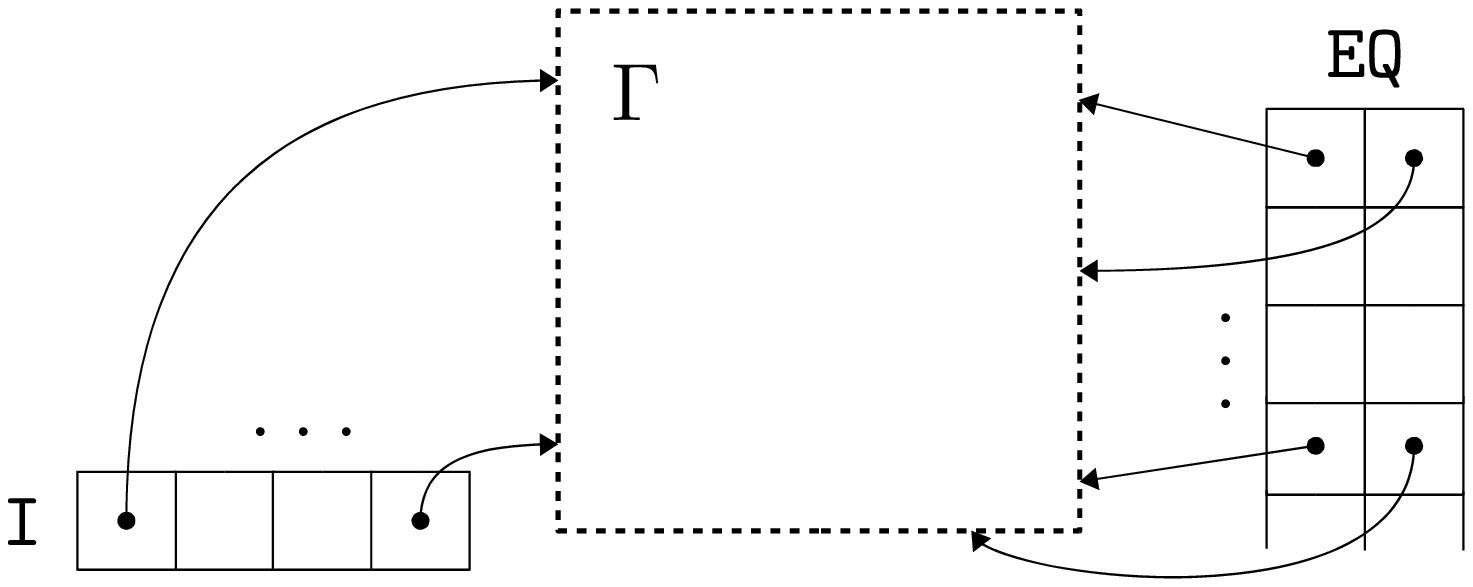}
\end{wrapfigure}
Here we give a low-level language, called LL0, which
defines a set of instructions to build and reduce a net to
normal form.  The concrete representation of a configuration can be
summarised by the diagram shown on the right, where
$\Gamma$ represents the net, $\sym{EQ}$ a stack of
equations, and $\sym{I}$ an interface.

For a net, we need two kinds of graph element: agents and names nodes.
Each of these is allocated memory in the heap.  An element, such as an
agent, may contain pointers to other elements (representing auxiliary
ports).  An agent can be coded in C as follows:
\begin{small}
\begin{program}
typedef struct Agent \{
  int id; struct Agent *port[];
\} Agent;  
\end{program}
\end{small}

\noindent
In this \sym{Agent} structure, each symbol
$\alpha_{1},\ldots,\alpha_{n}$ for agents is distinguished by a unique
\sym{id}.  The length of \sym{port} corresponds to the number of
auxiliary ports of an agent.  The stack of equations $\sym{EQ}$ is
initially empty.  Intuitively, an element of this stack can be written
using the following code fragment in C:
\begin{small}
\begin{program}
typedef struct Equation \{
  Agent *a1; Agent *a2;
\} Equation;
\end{program}
\end{small}

The interface $\sym{I}$ is a node arrays of fixed size $n$ 
as the size of the observable interface of a net 
can be pre-determined (and it is preserved during execution).
By using LL0, we encode Simpler calculus.

\paragraph{Building nets.} 
Graph elements, the stack of equations $\sym{EQ}$ and
the interface $\sym{I}$ are managed by instructions as shown 
in Figure~\ref{fig:instructions-LL0}.
The port numbers start from 1, and by using the instruction
$x\sym{[}p\sym{]} \sym{=} y$, we can assign a graph node $y$ into a
port $p>0$ of a graph node $x$. We also use the port 0 to refer to the id
of an element.  For instance, $x\sym{[0]} \sym{=} \alpha$ changes the id of
an agent node $x$ into $\alpha$.

\begin{figure}[t]
\begin{center}
{\rm\small
\begin{tabular}{lp{11.1cm}}\hline
Instruction & Description\\\hline
$\sym{\#agent } \alpha_1:p_1\sym{,} \cdots \sym{,} \alpha_n:p_n$
& 
Declare $\alpha_1,\cdots , \alpha_n$ as symbols of agents whose arity are $p_1,\cdots,p_n$.\\
$\sym{I=mkInterface(}n\sym{)}$& 
Create a fixed $n$-size interface and assign its pointer to the variable $\sym{I}$.\\
$x \, \sym{=} \, \sym{mkAgent(}\mathit{id}\sym{)}$ & 
Allocate (unused) memory for an agent node whose id is $\mathit{id}$ 
and assign it to the variable $x$.\\
$x \, \sym{=} \, \sym{mkName()}$ & 
Allocate (unused) memory for a name node, and assign it to the variable $x$.\\
%
%$x \, \sym{=} \, \sym{mkInd()}$ & 
%Allocate  memory for an indirection node, and assign it to the variable $x$.\\
%
$\sym{free(}x\sym{)}$ & Dispose of just an assigned allocation $x$ of a graph element (not recursively).\\
$x\sym{[}p\sym{]} \sym{=} y$ & Assign a graph element $y$ to a port $p >0$ of an agent node $x$.\\
$x\sym{[0]} \sym{=} \alpha$ & Change the id of an agent node $x$ into $\alpha$.\\
$\sym{push(}x\sym{,}y\sym{)}$ & 
Create an equation of two graph element $x, y$ in the stack of 
equations.\\
$\sym{stackFree()}$ & 
Dispose of the top element of the equation stack.\\
\hline
\end{tabular}}
\end{center}
\caption{Instructions of LL0}
\label{fig:instructions-LL0}
\end{figure}

\begin{wrapfigure}[3]{R}{2.2cm}
\vspace{-4mm}
\includegraphics[scale=\normalscale,keepaspectratio,clip]{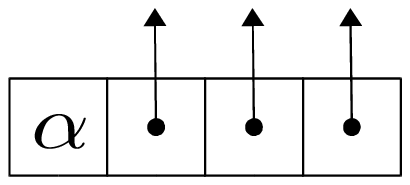}
\end{wrapfigure}
Here, we build terms in Simpler calculus.
To assign an arity to an agent, we use the following declaration:
%\begin{quote}
$\sym{\#agent } \alpha_1:p_1\sym{,} \cdots \sym{,} \alpha_n:p_n$,
%\end{quote}
where $p_i$ is the arity for an agent symbol $\alpha_i$ such that
$ar(\alpha_i) = p_i$.  After this declaration, a symbol $\alpha_i$
can be represented by a unique number and an agent's arity $p_i$ can be
referred to by
$\Func{arity}(\alpha_i)=p_i$.
We draw an agent node 
%whose $\sym{id}$ is the number assigned for the symbol 
$\alpha$ of arity 3 as shown the right above figure.

\begin{wrapfigure}[3]{R}{2.2cm}
\vspace{-4mm}
\includegraphics[scale=\normalscale,keepaspectratio,clip]{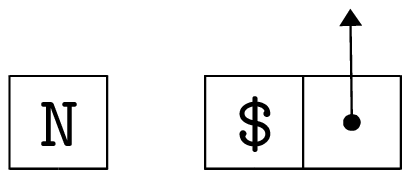}
\end{wrapfigure}
Name nodes are graph elements whose $\sym{id}$ is denoted 
by the symbol $\sym{N}$ and the arity is 0. 
We also use indirection nodes by $\sym{\$}$ and the arity is 1.
We assume that $\sym{N}$ and $\sym{\$}$ are selected from a set
that does not overlap with the set of agent symbols.
To allocate an agent node whose id is $\mathit{id}$ to a variable $x$, 
we use the following instruction:
$x\sym{=mkAgent(}\mathit{id}\sym{)}$.
A name node is allocated by $x\sym{=mkName()}$.
An assigned allocation $a$ of a graph element is disposed of (not recursively, just one node) by using 
$\sym{free(}a\sym{)}$. 

A connection between a principal port and an auxiliary port is encoded
by an assignment.  In this language, to assign a pointer of an
existing graph element $b$ to a port $p$ of another graph element $a$,
we use the following instruction: $a\sym{[}p\sym{]} \sym{=} b$.  We
note that the index of these ports start from 1.  For instance, a term
$\sym{Add(Z,r)}$ is encoded as shown below, together with the
graphical representation.
\vspace{-3mm}
\begin{center}
\begin{multicols}{3}
\small
\begin{program}
 1. #agent Z:0, Add:2
 2. aAdd=mkAgent(Add)
 3. aZ=mkAgent(Z)
 4. aAdd[1]=aZ
 5. r=mkName()
 6. aAdd[2]=r
\end{program}
\includegraphics[scale=\normalscale,keepaspectratio,clip]{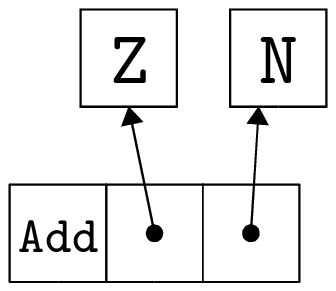}
\end{multicols}
\end{center}

\begin{wrapfigure}[2]{R}{1cm}
\vspace{-4mm}
\includegraphics[scale=\normalscale,keepaspectratio,clip]{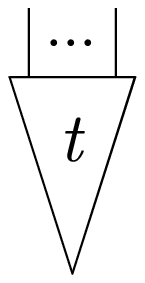}
\end{wrapfigure}
Generally, when agent nodes are connected together, 
they are trees that we represent in the following way,
where the free ports are at the top of the tree.

\begin{wrapfigure}[4]{R}{2.3cm}
\vspace{-2mm}
\includegraphics[scale=\normalscale,keepaspectratio,clip]{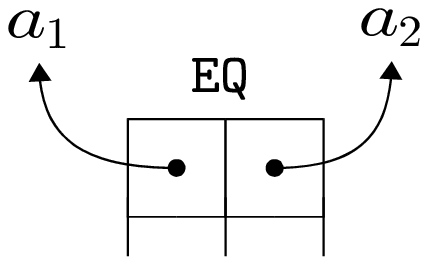}
\end{wrapfigure}
The stack of equations $\sym{EQ}$ is initially created empty.
An equation node can point to two graph elements.
To create an equation of two graph elements $a_1, a_2$ 
in the stack $\sym{EQ}$, we use the instruction:
$\sym{push(}a_1\sym{,}a_2\sym{)}$.
To pop an equation from the top of the stack $\sym{EQ}$, we use the instruction:
\texttt{stackFree()}.
We represent a connection between
principal ports by creating an equation between the two agent nodes
into the stack.  

\begin{wrapfigure}[7]{R}{5.8cm}
\vspace{-4mm}
\includegraphics[scale=\tinyscale,keepaspectratio,clip]{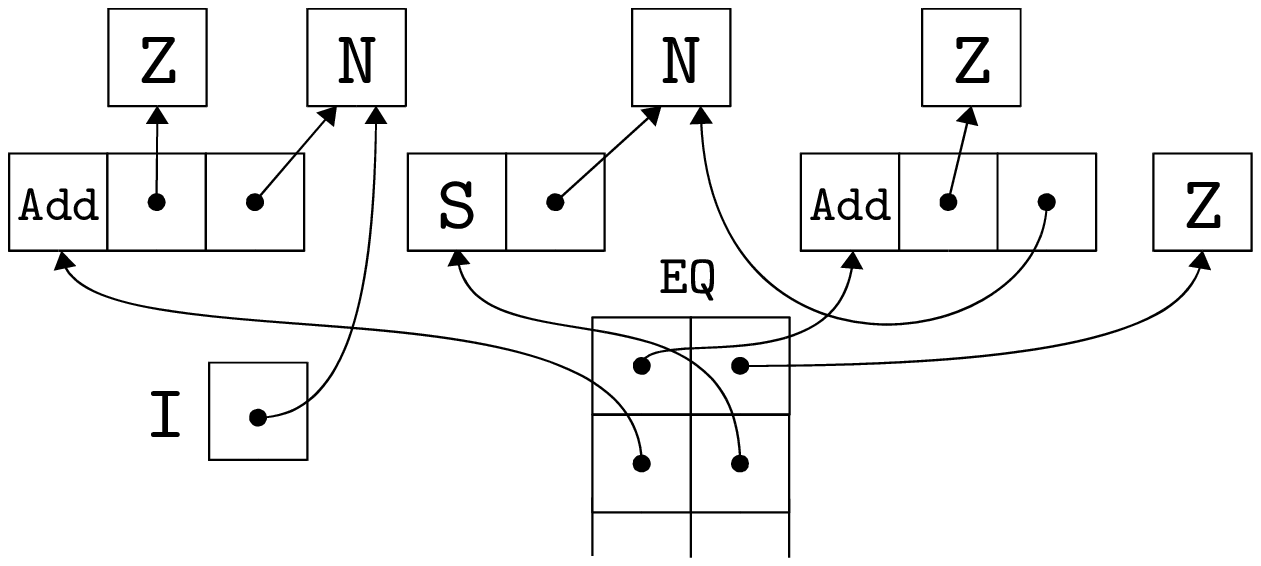}
\vspace{-7mm}
\caption{\small{}Representation of $\confSimple{r}{\sym{Add}(\sym{Z},r)=\sym{S}(w), \, \sym{Add}(\sym{Z},w)=\sym{S}(\sym{Z})}$ }
\label{example_connection_auxs_node2}
\end{wrapfigure}
Interfaces are created with the instruction:
$\sym{I=mkInterface(}n\sym{)}$.  Elements in $\sym{I}$ can be accessed
using the usual array notation $\sym{I[1]},\cdots,\sym{I[n]}$ and can
point to one graph element.  As an example, a configuration
$\confSimple{r}{\sym{Add}(\sym{Z},r)=\sym{S}(w), \,
  \sym{Add}(\sym{Z},w)=\sym{S}(\sym{Z})}$ is encoded using the
instructions below. Figure \ref{example_connection_auxs_node2} gives
the corresponding data-structure. For a connection between
two auxiliary ports, we assign one name node to two ports.

\medskip
\begin{small}
\begin{multicols}{3}
\begin{program}
 1. #agent Z:0,S:1,Add:2
 2. /*interface*/
 3. I=mkInterface(1)
 4. /*Add(Z,r)*/
 5. aAdd=mkAgent(Add)
 6. aZ=mkAgent(Z)
 7. aAdd[1]=aZ
 8. r=mkName()
 9. aAdd[2]=r
10. /*S(w)*/
11. bS=mkAgent(S)
12. w=mkName()
13. bS[1]=w
14. /*Add(Z,r)=S(w)*/
15. push(aAdd,bS)
16. /*Add(Z,w)*/
17. aAdd=mkAgent(Add)
18. aZ=mkAgent(Z)
19. aAdd[1]=aZ
20. aAdd[2]=w
21. /*S(Z)*/
22. bS=mkAgent(S)
23. bZ=mkAgent(Z)
24. bS[1]=bZ
25. /*Add(Z,w)=S(Z)*/
26. push(aAdd,bS)
27. /*interface*/
28. I[1]=r
\end{program}
\end{multicols}
\end{small}

\paragraph{Defining interaction rules.}\label{Section:how_to_define_interaction_rules}
We introduce \emph{rule procedures} to perform interaction rules.  For
an interaction rule between $\alpha(\Sequence{x})$ and
$\beta(\Sequence{y})$ we define a rule procedure  using the syntax:
$\sym{rule }\alpha\sym{ } \beta\sym{ \{} \ldots \sym{\}}$ and
we write instructions between the brackets $\sym{\{}$ and $\sym{\}}$ (rule block).
In execution, the procedures provide special variables $\sym{L},
\sym{R}$ that are pointers to the left and the right-hand side agents
of the active pair equation. Variables used in the instructions are only 
visible within the rule procedure.
Generally, these rule procedures are represented as transformations on the data-structure. 
For instance, the rule between $\sym{Add}$ and $\sym{Z}$ given by 
$\sym{Add}(x_1, x_2)=\sym{Z} \ito x_1=x_2$ is represented using the following procedure:
\begin{small}
\begin{multicols}{3}
\begin{program}
 1. rule Add Z \{ 
 2.   stackFree() 
 3.   push(L[1],L[2]) 
 4.   free(L)
 5.   free(R)
 6. \}
\end{program}
\end{multicols}
\end{small}
\noindent{}
The following illustrates transformations that will be applied by 
the rule procedure given above. 
\begin{center}
\includegraphics[scale=\smallscale,keepaspectratio,clip]{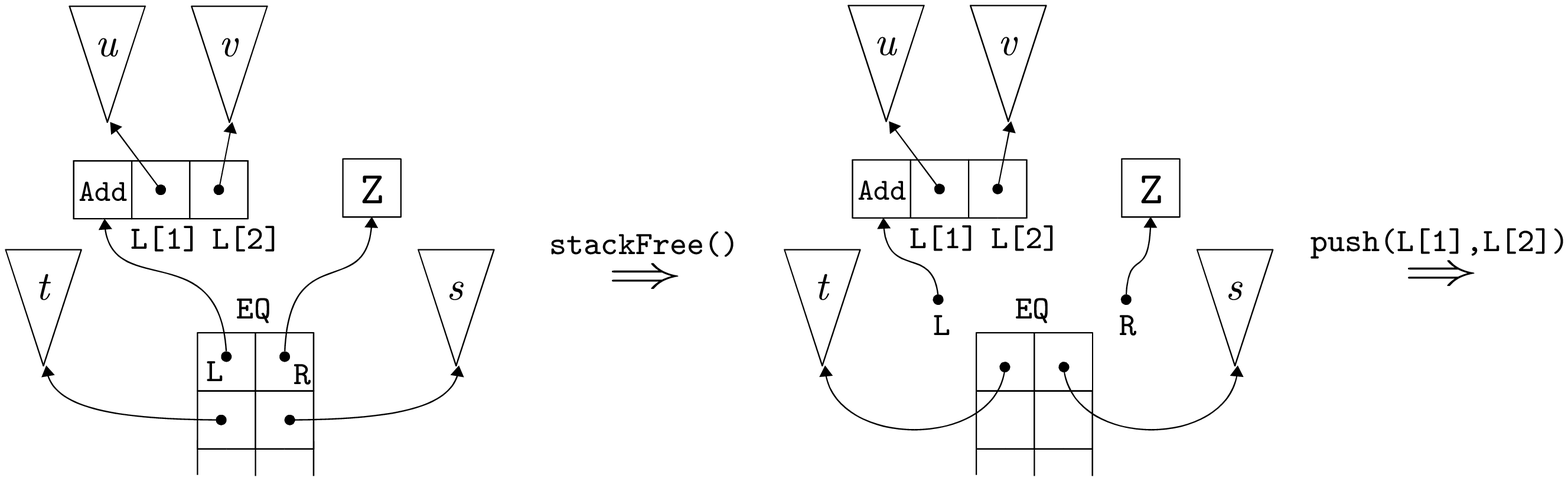}

\medskip
\includegraphics[scale=\smallscale,keepaspectratio,clip]{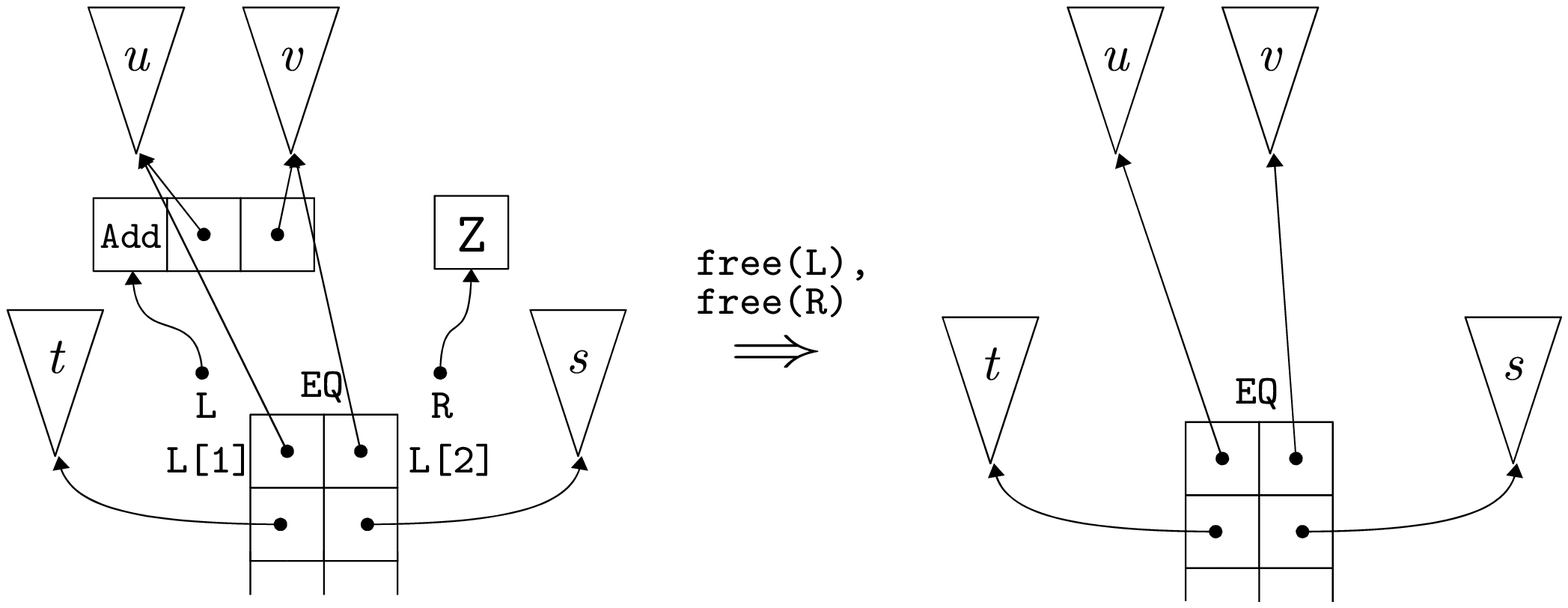}
\end{center}
In order to manage equations in the Simpler calculus, we need
mechanisms that perform rules Var1, Indirection1 and so on.
Figure~\ref{fig:computation-rules-for-name-and-ind} (a) and (b) are
instances of Var1 and Indirection1 rules to illustrate this.

\begin{figure}[t]
%\begin{breakbox}
\quad\begin{minipage}[t]{7.9cm}
\small
(a) $x = t,u = s \sto u[\ind{t}/x] = s$ \, where $x\in \NameSet(u)$
\begin{center}
\includegraphics[scale=\tinyscale,keepaspectratio,clip]{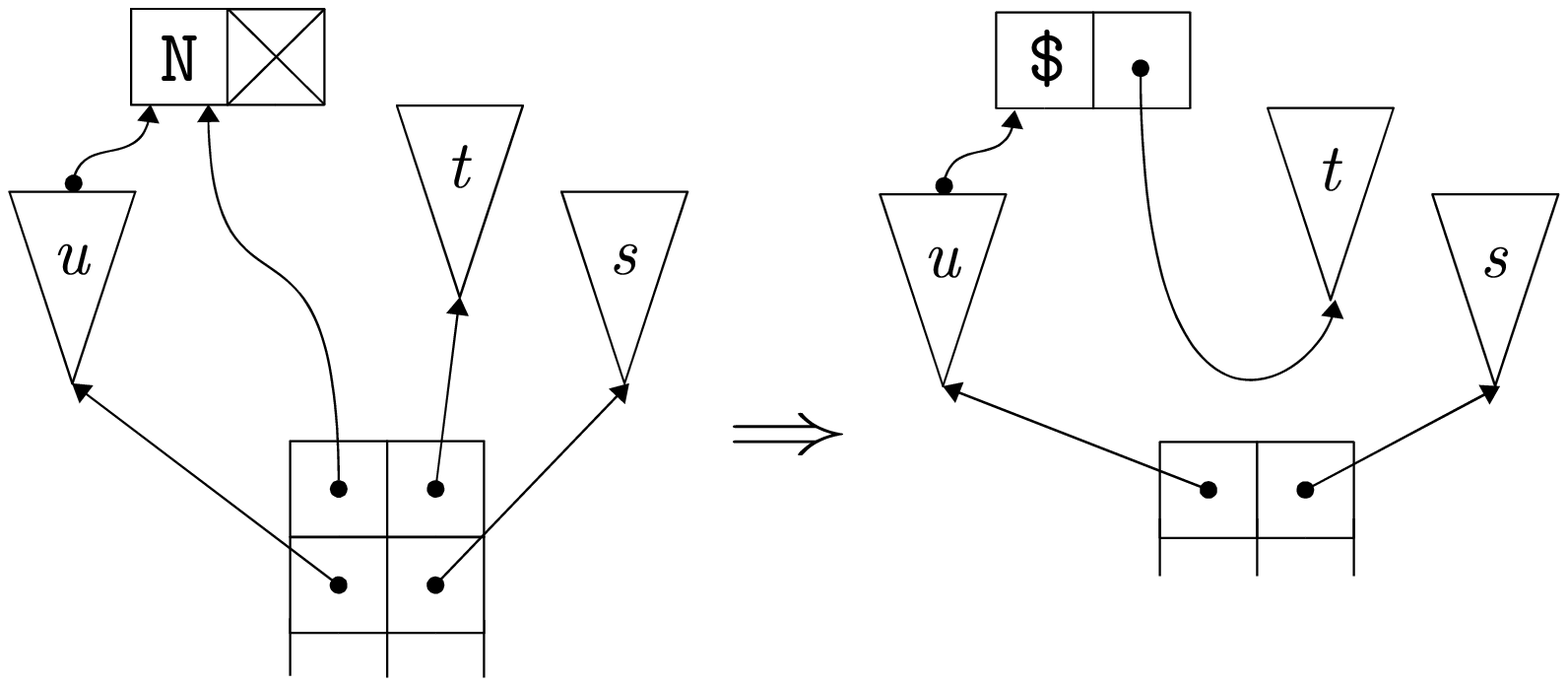}
\end{center}
\end{minipage}
\qquad\qquad
\begin{minipage}[t]{5.3cm}
\small
(b) $\ind{t} = s \sto t = s$
\begin{center}
\includegraphics[scale=\tinyscale,keepaspectratio,clip]{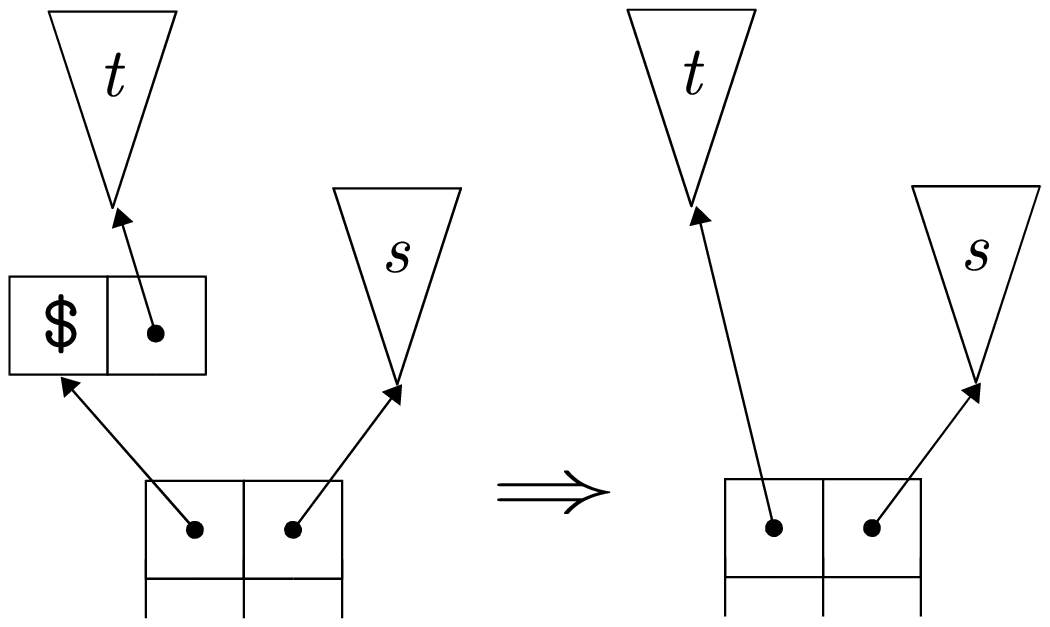}
\end{center}
\end{minipage}
%\end{breakbox}
\caption{Computation rules for name and indirection nodes}
\label{fig:computation-rules-for-name-and-ind}
\end{figure}

\paragraph{Compilation of Simpler calculus into LL0.}
Here we introduce a translation of Simpler calculus into LL0.
We use a set of pairs and operations for the pairs defined in
Definition~\ref{Definition:pairs-and-operations}.  We also use a
notation for strings. We write $\delimStrS$ and $\delimStrE$ as a
pair of delimiters to represent a string explicitly.  We use the
notation $\{x\}$ in a string as the result of replacing the occurrence
$\{x\}$ with its actual value.  For instance, if $x=\Str{\sym{abc}}$
and $y=89$ then $\Str{\sym{1}\{x\}\sym{2}\{y\}} =
\Str{\sym{1abc289}}$.  We use $+$ as an infix binary operation to
concatenate strings.

%%%%%%%%%%%%%%%%%%%%%%%%%%%%%%%%%%%%%%%%%%%%%%%%%%%%%%%%%%%%%%
\begin{definition}[Compilation of terms and nets]
Here we defined our compilation schemes that will generate LL0 for a
given interaction net system.
\begin{itemize}
\item We use a subset $\set{N}$ of $\NameSet \times \set{Str}$
($\NameSet$ is a set of names)
so that a name $x \in \NameSet$ can correspond to 
a string of a variable name in a code sequence and
those correspondences can be looked up from compilation functions.
We also use two operations $\set{N}(x):=\bot$ and $\set{N}(x):=\mathrm{str}$ for $\mathrm{str} \in \set{Str}$ as defined in Definition~\ref{Definition:pairs-and-operations}.
We define a function $\Func{makeN}$ to make such a set $\set{N}$ and 
a code sequence for those names by a given name set $\{x_1,\ldots,x_n\}$. The 
function $\Func{freshStr()}$ returns a fresh name. 
\begin{small}
$$
\begin{array}{lcl}
\Func{makeN}(\{x_1,\ldots,x_n\}) &\defeq&
\begin{array}[t]{l}
\Func{makeN'}(\{x_1,\ldots,x_n\}, \emptyset);
\end{array}\\
\Func{makeN'}(\{x_1,\ldots,x_n\}, \set{N}) &\defeq&
\begin{array}[t]{l}
\Func{let} \quad\set{N}_0 = \emptyset\Func{;}\\
a_1=\Func{freshStr();}
\quad{}c_1 = \Str{\{a_1\} \sym{=mkName()}}; 
\quad\set{N}_1 = (\set{N}_0(x_1):=a_1)\Func{;}
\cdots\\
a_n=\Func{freshStr();}
\quad{}c_n = \Str{\{a_n\} \sym{=mkName()}}; 
\quad\set{N}_n = (\set{N}_{n-1}(x_n):=a_n)\Func{;}\\

\Func{in} \quad(c_1 + \cdots + c_n, \set{N}_n) \quad\Func{end;}\\
\end{array}\\
\end{array}
$$
\end{small}

\item We define a translation $\Compile_s$ from a symbol set $\Sigma$
into a code string as follows:

\begin{small}
$\begin{array}{lcl}
\Compile_s(\emptyset)
&\defeq &
\begin{array}[t]{l}
\Str{}\\
\end{array}\\
\end{array}$

$\begin{array}{lcl}
\mid \Compile_s(\{\alpha_1,\ldots,\alpha_n\})
&\defeq&
\begin{array}[t]{l}
\Str{\sym{\#agent }}+
\Str{\{\alpha_1\}\sym{:}\{ar(\alpha_1)\}}+
\cdots + 
\Str{\sym{,}\{\alpha_n\}\sym{:}\{ar(\alpha_n)\}}\Func{;}\\
\end{array}\\
\end{array}$
\end{small}

\item A translation $\Compile_t$ from a term 
into a code string is defined as follows:

\begin{small}
$\begin{array}{lcl}
% -----------------------------------------------------------------
\Compile_t(x,\set{N})
& \defeq &
\begin{array}[t]{l}
(\Str{}, \set{N}(x))\\
\end{array}\\
\end{array}
$

$
\begin{array}{lcl}
% -----------------------------------------------------------------
\mid \Compile_t(\alpha(t_1,\ldots,t_n),\set{N})
&\defeq&
\begin{array}[t]{l}
\Func{let} \quad{}a = \Func{freshStr()}; 
\quad{}c = \Str{\{a\} \sym{=mkAgent(} \{\alpha\}\sym{)}};\\
\qquad{}(c_1, \, a_1) = \Compile_t(t_1,\set{N}); 
\quad{}c_1 = c_1 + \Str{\{a\}\sym{[1]=}\{a_1\}};
\quad{}\cdots\\
\qquad{}(c_n, \, a_n) = \Compile_t(t_n,\set{N});
\quad{}c_n = c_n + \Str{\{a\}\sym{[}n\sym{]=}\{a_n\}};\\
\Func{in} \quad(c + c_1 + \cdots + c_n, \, a) \quad
\Func{end};\\
\end{array}
\end{array}
$
\end{small}

\item A translation $\Compile_i$ from an interface $u_1,\ldots,u_n$ into a code sequence is defined as follows:

\begin{small}
$\begin{array}{lcl}
\Compile_i(-,\set{N})
& \defeq &
\Str{}\\
% -------------------------------
\end{array}$

$\begin{array}{lcl}
\mid \Compile_i(u_1,\ldots,u_n, \set{N})
& \defeq &
\begin{array}[t]{l}
\Func{let}
\quad{}(c_1, \, a_1) = \Compile_t(u_1,\set{N});
\quad{}c_1 = c_1+\Str{\sym{I[1]=}\{a_1\}}\Func{;}
\quad{}\cdots\\
\qquad{}(c_n, \, a_n) = \Compile_t(u_n,\set{N});
\quad{}c_n = c_n+\Str{\sym{I[}n\sym{]=}\{a_n\}}\Func{;} \\
\Func{in} \quad{}\Str{\sym{I=mkInterface[}n\sym{]}}+c_1+\cdots+c_n \quad \Func{end;}\\
\end{array}\\

\end{array}$

\end{small}

\item A translation $\Compile_e$ from an equation 
into a code string is defined as follows:

\begin{small}
$\begin{array}{l}
% -----------------------------------------------------------------
\Compile_e(t=s, \set{N})
\defeq 
\begin{array}[t]{l}
\Func{let}
\quad{}(c_1,\, a_1) = \Compile_t(t,\set{N});
\quad{}(c_2,\, a_2) = \Compile_t(s,\set{N});\\
\Func{in}
\quad c_1 + c_2 + \Str{\sym{push(}\{a_1\}\sym{,}\{a_2\}\sym{)}} \quad\Func{end};\\
\end{array}\\
\end{array}$
\end{small}

\item A translation $\Compile_{es}$ from an equation sequence
into a code string is defined as follows:

\begin{small}
$\begin{array}{l}
%--------------------------------------------------------------
\Compile_{es}(e_1,\ldots,e_n,\set{N})
 \defeq 
\begin{array}[t]{l}
\Compile_e(e_1,\set{N}) + \cdots + \Compile_e(e_n,\set{N})\Func{;}\\
\end{array}\\
\end{array}$
\end{small}

\item We define a translation $\Compile_c$ from a configuration
$\confSimple{\Sequence{u}}{\Theta}$ with a symbol set $\Sigma$
into a code string $c$ as follows:

\begin{small}
$\begin{array}{l}
%--------------------------------------------------------------
\Compile_c(\Sigma, \, \confSimple{\Sequence{u}}{\Theta})
 \defeq 
\begin{array}[t]{l}
\Func{let}
\quad c_0 = \Compile_s(\Sigma);
\quad (c_1,\set{N}) = 
  \Func{makeN}(\Func{Name}\confSimple{\Sequence{u}}{\Theta});\\
\qquad c_2 = \Compile_{es}(\Theta,\set{N});
\quad c_3 = \Compile_i(\Sequence{u},\set{N});\\
\Func{in} \quad c_0 + c_1+c_2+c_3 \quad \Func{end};\\
\end{array}\\

\end{array}$
\end{small}

\item We write just $\Compile$ when there is no ambiguity.

\end{itemize}
\end{definition}

\begin{example}\label{Example:compilation1}
Let us take a configuration $\confSimple{r}{\sym{Add}(\sym{Z},r)=\sym{S}(\sym{Z})}$ with 
a symbol set $\{\sym{Z}, \sym{S}, \sym{Add}\}$ as an example. The compilation 
$\Compile_c(\{\sym{Z}, \sym{S}, \sym{Add}\}, \, \confSimple{r}{\sym{Add}(\sym{Z},r)=\sym{S}(\sym{Z})})$ generates the following instructions:
\begin{small}
\begin{multicols}{3}
\begin{program}
 1. \#agent Z:0,S:1,Add:2
 2. r=mkName()
 3. a1=mkAgent(Add)
 4. a2=mkAgent(Z)
 5. a1[1]=a2
 6. a1[2]=r
 7. b1=mkAgent(S)
 8. b2=mkAgent(Z)
 9. b1[1]=b2
10. push(a1,b1)
11. I=mkInterface[1]
12. I[1]=r
\end{program}
\end{multicols}
\end{small}
\end{example}

\begin{definition}[Compilation of rules]
We define a translation $\Func{Compile}_r$ from a rule into a sequence of code strings as follows:

\begin{minipage}[t]{8.5cm}
\begin{small}
$\begin{array}[t]{l}
\Compile_r(\alpha(\Sequence{x})=\beta(\Sequence{y})\ito \Theta)
 \defeq \\
\begin{array}[t]{l}
\Func{let}\\
\quad \set{N}_l = \Compile_{rn}(\Sequence{x}, \sym{L}, \emptyset);
\quad \set{N}_r = \Compile_{rn}(\Sequence{y}, \sym{R},\set{N}_l);\\
\quad (c_1,\set{N}) = \Func{makeN'}(\Func{Name}(\Theta) - \{\Sequence{x}, \Sequence{y}\}, \set{N}_r);\\
\quad{}c_2 = \Compile_{es}(\Theta,\set{N});\\
\Func{in}\\
\quad\Str{\sym{rule } \{\alpha\}\sym{ }
\{\beta\}\sym{ \{}}\\
\quad + \Str{\sym{stackFree()}}\\
\quad + c_1 + c_2\\ 
\quad + \Str{\sym{free(L)}} + \Str{\sym{free(R)}} + \Str{\sym{\}}}\\
\Func{end};\\
\end{array}
\end{array}$
\end{small}
\end{minipage}
\quad
\begin{minipage}[t]{7cm}
\begin{small}
$\begin{array}[t]{l}
\Compile_{rn}((x_1,\ldots,x_n), \, LR, \, \set{N})
\defeq\\
\begin{array}[t]{l}
\Func{let}\\
\quad{}\set{N}_0 = \set{N}\Func{;}\\
\quad{}\set{N}_1 = (\set{N}_0(x_1):=\{LR\}\sym{[1]})\Func{;}\\
\qquad{}\vdots\\
\quad{}\set{N}_n = (\set{N}_{n-1}(x_n):=\{LR\}\sym{[}n\sym{]})\Func{;}\\
\Func{in}\\
\quad \set{N}_n\\
\Func{end;}\\
\end{array}
\end{array}$
\end{small}
\end{minipage}

\end{definition}

\begin{example}\label{Example:compilation-Add-SZ}
The results of $\Compile_r(\sym{Add}(x_1,x_2)=\sym{Z}\ito x_1=x_2)$ 
and $\Compile_r(\sym{Add}(x_1,x_2) = \sym{S}(y)\ito \sym{Add}(x_1,w)=y,x_2=\sym{S}(w))$
are as follows:
\begin{small}
\begin{multicols}{3}
\begin{program}
 1. rule Add Z \{
 2.   stackFree()
 3.   push(L[1],L[2])
 4.   free(L)
 5.   free(R)
 6. \}
\end{program}

\begin{program}
 1. rule Add S \{
 2.   stackFree()
 3.   w=mkName()
 4.   a1=mkAgent(Add)
 5.   a1[1]=L[1]
 6.   a1[2]=w
 7.   push(a1,R[1])
 8.   b1=mkAgent(S)
 9.   b1[1]=w
10.   push(L[2],b1)
11.   free(L)
12.   free(R)
13. \}
\end{program}
\end{multicols}
\end{small}
\end{example}

\paragraph{Back-end of the compilation.}
Here we show how these translated codes are evaluated on the
standardised implementation model in the C language, showing the
correspondence of codes in LL0 with the C language.

\begin{itemize}
\item $\sym{\#agent } \alpha_1:p_1\sym{,} \ldots \sym{,} \alpha_n:p_n$.
For each sort of agent, we assign a unique number that is greater than 1.
We also assign 0 to the id for name nodes.
The declaration for agent symbols corresponds as follows:
\begin{small}
\begin{quote}
$\sym{\#define ID\_NAME 0}$\\
$\sym{\#define ID\_} \alpha_1 \sym{ 1}$\\
$\sym{   }\cdots$\\
$\sym{\#define ID\_} \alpha_n \sym{ }n$\\
$\sym{\#define MAX\_AGENTID} \sym{ } n$\\
\end{quote}
\end{small}
In addition, to manage symbols and arities, we define two arrays
$\sym{Symbols}$ and $\sym{Arities}$ as follows:
\begin{small}
\begin{quote}
$\sym{char Symbols[MAX\_AGENTID+1] = \{"", "} \alpha_1 \sym{",} \ldots \sym{,"}\alpha_n \sym{"\};}$\\
$\sym{int Arities[MAX\_AGENTID+1] = \{1,}  p_1 \sym{,} \ldots \sym{,} p_n \sym{\};}$
\end{quote}
\end{small}
\item $\sym{I=mkInterface(}n\sym{)}$.
This makes a global $n$-size array for the interface and corresponds to:
\begin{small}
\begin{program}
#define SIZE_INTERFACE \(n\)
Agent *I[SIZE_INTERFACE];
\end{program}
\end{small}
\item $x \, \sym{=} \sym{mkAgent(}\mathit{id}\sym{)}$. 
This makes a variable $x$ whose type is $\texttt{Agent}$ and 
assigns an agent node whose $\texttt{id}$ is $\mathit{id}$. 
This instruction corresponds to:
\sym{Agent *\(x\)=mkAgent(\(\mathit{id}\));}

\item $x \, \sym{=} \sym{mkName()}$.  This makes a variable $x$ whose
  type is $\texttt{Agent}$ and assigns an agent node whose \sym{id} is
  \sym{ID\_NAME}. Then it assigns \sym{NULL} to \sym{port[0]} of the
  $x$ in order to be distinguished from indirection nodes: \sym{Agent
    *\(x\)=mkAgent(ID\_NAME); \(x\)->port[0]=NULL;}

\item $\sym{free(}x\sym{)}$.
This disposes of a graph node assigned to $x$ (not recursively, just an assigned node):
\sym{freeAgent(\(x\));}

\item $x\sym{[}p\sym{]} \sym{=} y$.
This assigns a graph element $y$ to a port $p$ of an agent node $x$.
The port $p$ in LL0 corresponds to the port $p-1$ in the standardised 
implementation method, and thus this instruction 
corresponds to the following code:
\texttt{\(x\)[\(p-1\)]=\(y\);}

\item $x\sym{[0]} \sym{=} \alpha$.
This changes the id of an agent $x$ into $\alpha$. 
This corresponds to the following code:
\texttt{\(x\)->id=\(\alpha\);}

\item $\sym{push(}x\sym{,}y\sym{)}$.
This pushes two agents onto the equation stack.
This corresponds to the following code:
\texttt{pushActive(\(x\),\(y\));}

\item $\sym{stackFree()}$.
This disposes  of the top element of the equation stack.
In the translation result, it occurs in rule procedures.
In this implementation, 
the function $\texttt{popActive}$ manages the index of the equation stack,
and thus no code is required.
\end{itemize}

Next we manage the translated LL0 instructions for rule procedures.  A
rule procedure in LL0 such as ``$\sym{rule Alpha Beta}$'' is encoded
as a function that is named as $\sym{Alpha\_Beta}$, takes two pointers
$\sym{*a1}$ and $\sym{*a2}$ to two elements of the equation, and
creates nets according to interaction rules.  The special variables
$\sym{L}$ and $\sym{R}$ in the rule procedures are denoted as
$\sym{*a1}$ and $\sym{*a2}$, and thus
$\sym{L[1]},\sym{L[2]},\ldots,\sym{R[1]},\sym{R[2]},\ldots$ are
expressed as:
$\sym{a1->port[0]},\sym{a1->port[1]},\ldots,\sym{a2->port[0]},\sym{a2->port[1]},\ldots$.
As an example the rule procedures for $\sym{Add}$ and $\sym{Z}$ 
and for $\sym{Add}$ and $\sym{S}$ 
are encoded as follows:
\begin{multicols}{2}
\begin{small}
\begin{program}
 1. void Add_Z(Agent *a1, Agent *a2) \{
 2.   pushActive(a1->port[0],a1->port[1]);
 3.   freeAgent(a1);
 4.   freeAgent(a2);
 5. \}
\end{program}
\begin{program}
 1. void Add_S(Agent *a1, Agent *a2) \{
 2.   Agent *aS=mkAgent(ID_S);
 3.   Agent *aAdd=mkAgent(ID_Add);
 4.   Agent *w=mkName();
 5.   aAdd->port[0]=a1->port[0];
 6.   aAdd->port[1]=w;
 7.   pushActive(aAdd, a2->port[0]);
 8.   aS->port[0]=w;
 9.   pushActive(a1->port[1], aS);
10.   freeAgent(a1);
11.   freeAgent(a2);
12. \}
\end{program}
\end{small}
\end{multicols}

To manage these functions, we define a rule table $\sym{R}$, 
which stores pointers to those functions.
Here, for simplicity, we use the following simple matrix:
\begin{small}
\begin{program}
typedef void (*RuleFun)(Agent *a1, Agent *a2);
RuleFun R[MAX_AGENTID+1][MAX_AGENTID+1];
\end{program}
\end{small}
For instance, the above function is stored as:
\sym{R[ID\_Add][ID\_Z]=\&Add\_Z;}. The run-time function
$\sym{eval}$ is written as follows:
\begin{multicols}{2}
\begin{small}
\begin{program}
 1. void eval() \{
 2.  Agent *a1, *a2;
 3.  while (popActive(&a1, &a2)) \{
 4.   if (a2->id != ID_NAME) \{
 5.    if (a1->id != ID_NAME) \{ //Interact
 6.     R[a1->id][a2->id](a1, a2);
 7.    \} else if (a1->port[0] != NULL) \{
 8.     Agent *a1p0=a1->port[0]; //Ind1
 9.     freeAgent(a1);
10.     pushActive(a1p0, a2);
11.    \} else a1->port[0]=a2; //Var1
12.   \} else if (a2->port[0] != NULL) \{
13.     Agent *a2p0=a2->port[0]; //Ind2
14.     freeAgent(a2);
15.     pushActive(a1, a2p0);
16.   \} else a2->port[0]=a1; //Var2
17.  \}
18. \}
\end{program}
\end{small}
\end{multicols}

%Future works                                 
\section{Discussion}\label{sec:discussion}

To examine how data structures in INET, \inn, amineLight (the fastest
evaluator) and our implementation affect execution speeds, we
implemented a number of evaluators using the different encoding
methods.  We fix the number of ports as \sym{MAX\_PORT} that is
obtained during compilation, and we pre-populate the heap with these
nodes.  The fixed-size node representation has the disadvantage of
using more space than needed, but the advantage of being able to
manage and reuse nodes in a simpler way~\cite{Peyton}.  INET and \inn
are based on the graph calculus of interaction nets. Agent nodes are
represented as C structures:

\noindent\begin{minipage}[t]{9.3cm}
\begin{small}
\begin{program}
1. typedef struct Agent \{
2.   int id; struct Port *port[MAX_PORT];
3. \} Agent;  
4. typedef struct Port \{
5.   Agent *agent; int portNum;
6. \} Port;  
\end{program}
\end{small}
\end{minipage}
\begin{minipage}[t]{3cm}
(a) INET
\begin{center}
\includegraphics[scale=\smallscale,keepaspectratio,clip]{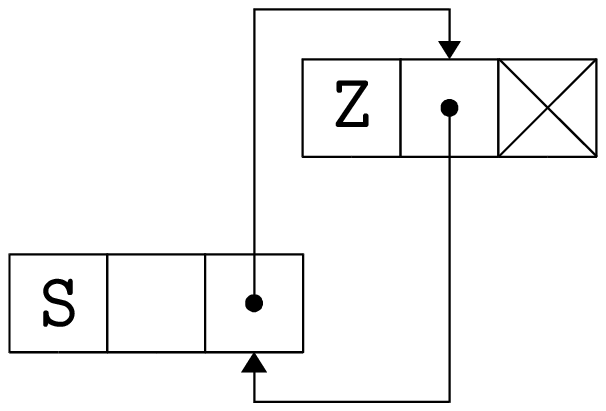}
\end{center}
\end{minipage}
\quad
\begin{minipage}[t]{3cm}
(b) \inn
\begin{center}
\includegraphics[scale=\smallscale,keepaspectratio,clip]{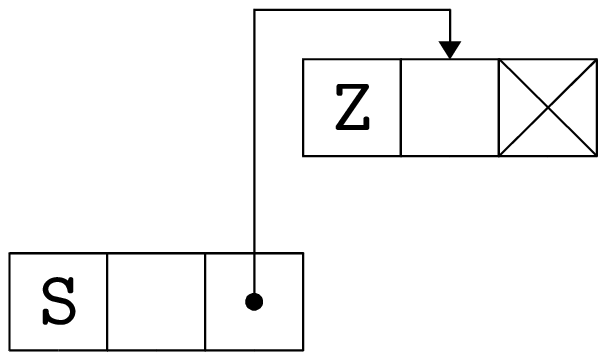}
\end{center}
\end{minipage}
\medskip

\noindent
In contrast with our method, the principal ports are
assigned to \sym{port[0]}, and connections between auxiliary ports are
encoded as mutual links between ports of agent nodes; in INET every
connection is linked mutually as shown in the above (a), while
\inn uses such mutual connections only for the
connection between auxiliary ports as shown in (b).  Although \inn has been proposed before INET, \inn
can be regarded as a refined version of INET.  We call the method to
use mutual links for the connection between auxiliary ports
\emph{undirected encoding}.

\begin{wrapfigure}[5]{R}{7.6cm}
\vspace{-4mm}
(c) Representation of $x=y$ in amineLight
\begin{center}
\includegraphics[scale=\smallscale,keepaspectratio,clip]{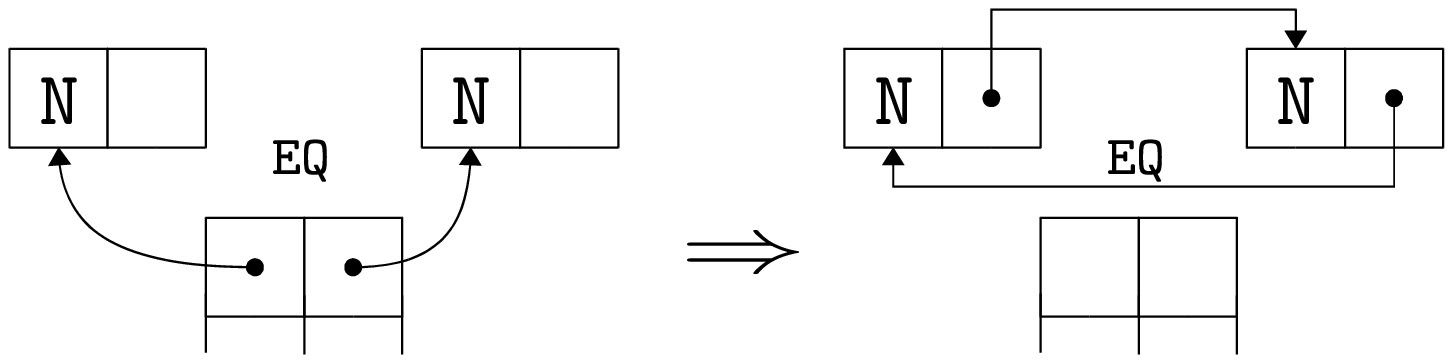}
\end{center}
\end{wrapfigure}
amineLight is based on the Lightweight calculus, and uses names to
represent connections between auxiliary ports, and every term is
encoded by single links at the start of the execution.  Our method,
called \emph{directed encoding}, uses single links for the connection
between auxiliary ports.  In the case of amineLight, an equation such
as $x=y$ becomes represented by mutual links during execution as shown
in the figure (c), while in our method, the equation preserves the
directed encoding, thus a single link
(Figure~\ref{fig:computation-rules-for-name-and-ind}).  In undirected
encoding method names do not occur, and in directed encoding method
substitution for each name is performed by removing the indirected
connection via the name locally.  Thus, the implementation needs no
environments for substitutions.  We do not garbage collect, taking
account of an optimisation mentioned subsequently in this section.

The table below shows execution times in seconds for computing
Fibonacci ($F_n$), Ackermann ($A$) and Church
numerals~\cite{MackieIC:yalyal}.  We see from the table that our
execution times are almost similar to those of amineLight and thus in
terms of the cost, the undirected encoding method of \inn is the best.

%\begin{table}[t]
\begin{center}
{\small%\tabcolsep = 2mm
\begin{tabular}{|r||r|r|r|r|}\hline
 &  \multicolumn{1}{r|}{Undirected(INET)} &  
\multicolumn{1}{r|}{Undirected(\inn)}
 & \multicolumn{1}{r|}{Directed(Light)}
 & \multicolumn{1}{r|}{Directed(Simpler)}   \\\hline\hline
$F_{32}$ & 1.58 & 1.37 & 1.52 & 1.49 \\\hline
$F_{33}$ & 2.62 & 2.29 & 2.52 & 2.49 \\\hline
$F_{34}$ & 4.37 & 3.80 & 4.21 & 4.15\\\hline
$A(3,10)$ & 1.77 & 1.42 & 1.59 & 1.58\\\hline
$A(3,11)$ & 7.12 & 5.73 & 6.44 & 6.39 \\\hline
$A(3,12)$ & 29.47 & 24.01 & 26.39 & 26.14 \\\hline
\texttt{2 7 6 I I} & 0.73 & 0.71 & 1.26 & 1.28 \\\hline
\texttt{2 7 7 I I} & 2.12 & 2.13 & 3.58 & 3.68\\\hline
\end{tabular}
}
\end{center}

In comparison to amineLight, our implementation computes 
Fibonacci numbers and Ackermann function a little faster.
On the other hand, amineLight performs better in the 
Application of Church numerals. The reason is that Application of Church numerals 
demand a lot of computation for names,
especially for equations such as $x=y$,  
yet these operations %take  a little more time compared to the amineLight encoding.
require extra computational steps in our implementation.
To illustrate this point,  let us look at the computation of the following sequence of equations:
$\alpha=x, \, y=\beta, \, x=y$.
The lightweight abstract machine in amineLight reduces 
it to $\beta=\alpha$ in two steps:
$\conf{}{\alpha=x, \, y=\beta, \, x=y} \rtoCom
\conf{}{\alpha=y, \, y=\beta} \rtoCom
\conf{}{\alpha=\beta}$,
whereas our encoding method takes four steps:
$\confSimple{}{\alpha=x, \, y=\beta, \, x=y} \lra 
\confSimple{}{\alpha=\ind{y}, \, y=\beta} \lra 
\confSimple{}{\alpha=y, \, y=\beta} \lra 
\confSimple{}{\alpha=\ind{\beta}} \lra 
\confSimple{}{\alpha=\beta}$.
This is because Lightweight calculus manages both sides of an equation,
while Simpler one manages only a single side.
To illustrate further, the table below
%~\ref{table:experimental_nameoperation} 
shows ratios of name operations (denoted as ``N'') to interaction operations (as ``I'').
%\begin{table}[t]
\begin{center}
{\small%\tabcolsep = 2mm
\begin{tabular}{|r|r||r|r||r|r|}\hline
 & \multicolumn{1}{c||}{\multirow{2}{*}{I}} & \multicolumn{2}{c||}{Light} & \multicolumn{2}{c|}{Simpler}\\\cline{3-6}
 &   &  \multicolumn{1}{c|}{N} & N/I 
 &  \multicolumn{1}{c|}{N} & N/I\\\hline\hline
$F_{32}$ & 74636718 & 51008017 &  0.68 &  65106325 &  0.87 \\\hline
$F_{33}$ & 123315177 & 82532797 & 0.67 &  105344341 & 0.85 \\\hline
$F_{34}$ & 203654818 & 133540964 & 0.66 &  170450820 & 0.84 \\\hline
$A(3,10)$ & 134103148 & 134094952 & 1.00 &  134094952 & 1.00 \\\hline
$A(3,11)$ & 536641652 & 536625264 & 1.00 &  536625264 & 1.00 \\\hline
$A(3,12)$ & 2147025020 & 2146992248 & 1.00 &  2146992248 & 1.00 \\\hline
\texttt{2 7 6 I I} & 15676873 & 43111255 & 2.75 &  64538288 & 4.12 \\\hline
\texttt{2 7 7 I I} &  46118916& 126826871 & 2.75 &   190190039 & 4.12 \\\hline
\end{tabular}
}
\end{center}
With respect to the computation of Application of Church numerals,
the ratio increases to 4.12 compared to only 2.75 in the amineLight encoding method.
Even though the cost of each operation for those names is quite small, 
as shown in the computation of Fibonacci number 
that is faster where the ratio increases by 0.18,
the much accumulation of the cost induces the less efficiency. We anticipate that it
is possible to reduce the cost of operations for names by enhancing the data-structures.

\begin{wrapfigure}[7]{R}{7.5cm}
\vspace{-4mm}
\includegraphics[scale=\tinyscale,keepaspectratio,clip]{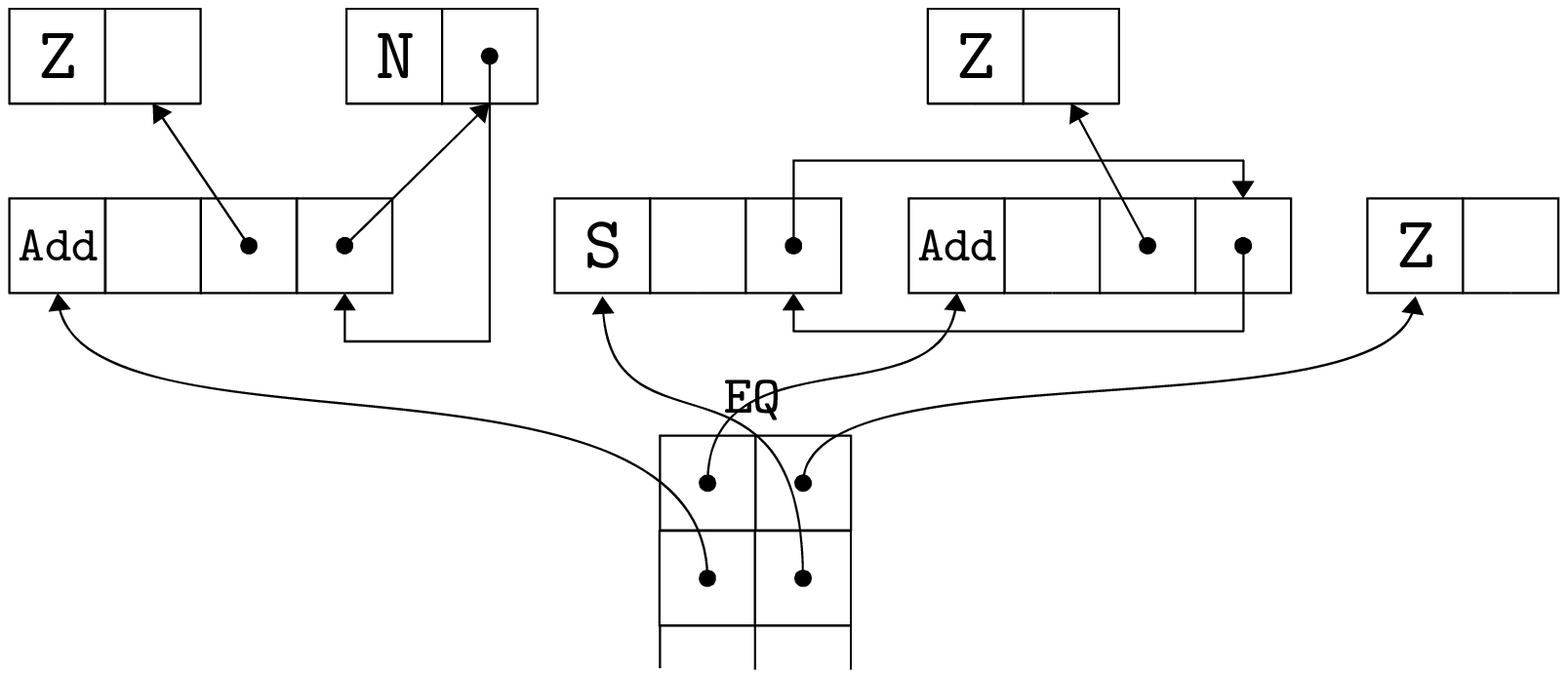}
\end{wrapfigure}
The advantage of the directed encoding method is locality of the
rewriting in parallel execution.  An active pair must be reduced with
the interface preserved, and thus reduction of two active pairs that
are connected via an auxiliary port(s) of an interacting agent need to
be managed differently because each rewrite will update the same set
of auxiliary ports.  As an example take the graph shown in the right
side figure, which is a graph encoded in \inn of the net in
Figure~\ref{example_connection_auxs_node2}.  The two active pairs are
connected to each other via the auxiliary port of the interacting
agent $\sym{Add}$ and $\sym{Z}$, and this connection information must
be preserved when the active pairs are reduced at once. This checking
process could be spread into other parts of the net globally.  In the
case of the directed encoding method, the connection is preserved by a
name as shown in Figure~\ref{example_connection_auxs_node2}, and thus
reduction of the two active pairs are performed in parallel as long as
critical sections are used to manage names.

\begin{wrapfigure}[6]{R}{3.8cm}
\vspace{-3mm}
\includegraphics[scale=\tinyscale,keepaspectratio,clip]{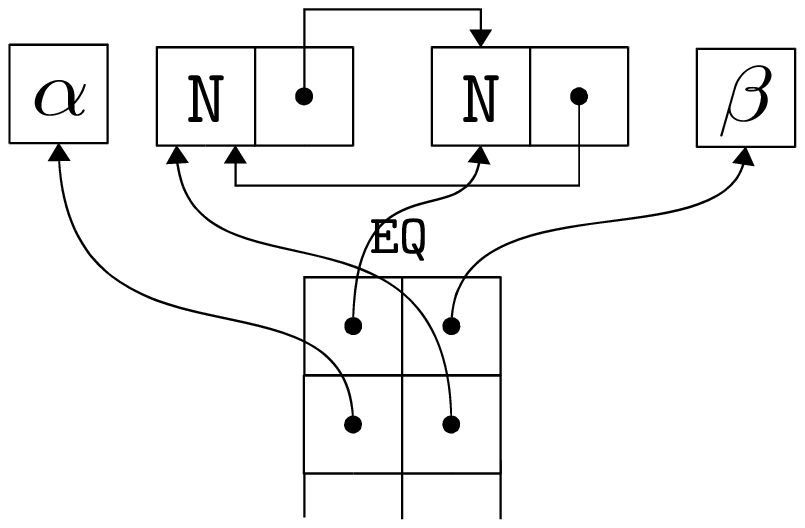}
\end{wrapfigure}
The mutual links affect the locality and thus we have proposed the new
method of encoding so that a connection between names 
can be represented by a single link.
With respect to the encoding method in amineLight, though it is the directed one, 
the connections between names are represented 
as mutual links (Figure (c)) and we need to check for the lock 
and this can also spread globally.
Take the right side figure as an example. This shows a graph after the first step computation of $\alpha=x, \, y=\beta, \, x=y$.
The two elements of the stack should not be performed at once because each rewriting affects another, so the checking process is also required.

In addition, our model is simpler than the model of amineLight 
in terms of dealing with equations, 
thus only a single side of an equation is managed.
This derives less critical sections that are caused only by the computational rules Var1 (Figure~\ref{fig:computation-rules-for-name-and-ind} (a)) and Var2,
since name nodes can be pointed-to by two active pairs (that is, auxiliary ports of an active pair are connected).
Moreover, those are performed by connecting the ports of names to
other principal ports of unlocked agent nodes, therefore
these can be locked with an atomic operation such as Compare-and-swapping as follows:
\begin{small}
\begin{multicols}{2}
\begin{program}
 1. void eval() \{
 2.  Agent *a1,*a2;
 3.  while (popActive(&a1,&a2)) \{
 4. loop:
               \vdots
12.    \} else if (!(__sync_bool_compa
   re_and_swap(&(a1->port[0]),NULL,a2))) 
13.     goto loop; //retry
\end{program}
\end{multicols}
\end{small}

We finish this section by outlining an important optimisation, but
leave the implementation details for future work.  Once a net is
compiled into an instruction list of LL0, operations such as
producing, disposing and connecting ports of agents is done at the
level of execution of those instructions.
We illustrate the optimisation by considering the rule between $\sym{Add}$ and $\sym{S}$:
$\sym{Add}(x_1,x_2) = \sym{S}(y)\ito \sym{Add}(x_1,w)=y,  x_2=\sym{S}(w)$.
\noindent
The compilation result of this rule illustrated in
Example~\ref{Example:compilation-Add-SZ}.  In the right-hand side of
this rule, the active pair agents $\sym{Add}$ and $\sym{S}$ also
occur.  Thus, instead of producing new agents, it is possible to reuse
the active pair agents.  By introducing \sym{StackL} and \sym{StackR}
to refer the top elements of the stack $\mathit{EQ}$, it is possible
to obtain an alternative sequence of instructions where number of
instructions, especially for heap allocations, decreases and thus
faster execution is expected:
\begin{multicols}{3}
\begin{small}
\begin{program}
1. rule Add S \{
2.   w=mkName()
3.   x2=StackL[2]
4.   tmpR=StackR
5.   StackR=tmpR[1]
6.   tmpR[1]=w
7.   push(x2,tmpR)
8. \}
\end{program}
\end{small}
\end{multicols}

\section{Conclusion}\label{sec:conc}

In this paper we have designed a simple data-structure for
representing interaction nets, and designed a corresponding calculus
that has a direct relationship with the structure. As a consequence,
we can use the calculus to reason about the rewriting process, and
also to study the cost of reduction. This led to an investigation into
optimising rules which we outlined in Section 5. We believe that this
model can provide the basis for further development implementation
technology for interaction nets.

\bibliography{bibfile}

\end{document}